\documentclass[onecolumn,accepted=2019-06-26]{quantumarticle}
\pdfoutput=1
\setlength{\parskip}{3pt}

\usepackage{graphicx}
\usepackage[pdftex,colorlinks=true,linkcolor=blue,citecolor=blue]{hyperref}
\usepackage{amsmath, amsthm, amssymb}
\usepackage{mathtools}
\usepackage{subfigure}
\usepackage{comment}
\usepackage{url}
\usepackage[ruled,lined,linesnumbered]{algorithm2e}
\usepackage{tikz}
\usepackage[numbers,sort&compress]{natbib}


\newcommand{\R}{\mathbb{R}}

\newcommand{\C}{\mathbb{C}}
\newcommand{\Z}{\mathbb{Z}}

\newcommand{\E}{\,\mathbb{E}}

\newcommand{\ket}[1]{| #1 \rangle}
\newcommand{\bra}[1]{\langle #1|}
\newcommand{\braket}[1]{\langle #1 \rangle}
\newcommand{\hsip}[2]{\langle #1,#2 \rangle}
\newcommand{\proj}[1]{| #1 \rangle \langle #1 |}

\DeclareMathOperator{\tr}{tr}
\DeclareMathOperator{\rank}{rank}

\DeclareMathOperator{\sgn}{sgn}

\renewcommand{\Re}{\operatorname{Re}}

\newcommand{\eee}{\mathrm{e}}

\newcommand{\be}{\begin{equation}}
\newcommand{\ee}{\end{equation}}
\newcommand{\bea}{\begin{eqnarray}}
\newcommand{\eea}{\end{eqnarray}}
\newcommand{\bes}{\begin{equation*}}
\newcommand{\ees}{\end{equation*}}
\newcommand{\beas}{\begin{eqnarray*}}
\newcommand{\eeas}{\end{eqnarray*}}

\makeatletter
\newtheorem*{rep@theorem}{\rep@title}
\newcommand{\newreptheorem}[2]{%
\newenvironment{rep#1}[1]{%
 \def\rep@title{#2 \ref{##1} (restated)}%
 \begin{rep@theorem}}%
 {\end{rep@theorem}}}
\makeatother


\newtheorem{thm}{Theorem}
\newtheorem*{thm*}{Theorem}
\newtheorem{cor}[thm]{Corollary}

\newtheorem{lem}[thm]{Lemma}
\newtheorem*{lem*}{Lemma}
\newtheorem{claim}[thm]{Claim}
\newtheorem{prop}[thm]{Proposition}

\newtheorem{fact}[thm]{Proposition}
\newtheorem{dfn}{Definition}
\newreptheorem{thm}{Theorem}
\newreptheorem{lem}{Lemma}
\newreptheorem{cor}{Corollary}
\newreptheorem{fact}{Proposition}


\begin{document}

\title{Quantum states cannot be transmitted efficiently \mbox{classically}}
\author{Ashley Montanaro}
\affiliation{School of Mathematics, University of Bristol, UK}
\email{ashley.montanaro@bristol.ac.uk}
\maketitle

\begin{abstract}
We show that any classical two-way communication protocol with shared randomness that can approximately simulate the result of applying an arbitrary measurement (held by one party) to a quantum state of $n$ qubits (held by another), up to constant accuracy, must transmit at least $\Omega(2^n)$ bits. This lower bound is optimal and matches the complexity of a simple protocol based on discretisation using an $\epsilon$-net. The proof is based on a lower bound on the classical communication complexity of a distributed variant of the Fourier sampling problem. We obtain two optimal quantum-classical separations as easy corollaries. First, a sampling problem which can be solved with one quantum query to the input, but which requires $\Omega(N)$ classical queries for an input of size $N$. Second, a nonlocal task which can be solved using $n$ Bell pairs, but for which any approximate classical solution must communicate $\Omega(2^n)$ bits.
\end{abstract}


\section{Introduction}
\label{sec:intro}

How much classical information does it take to store or transmit a quantum state? In one sense, the answer is clear: a pure state of $n$ qubits corresponds to a unit vector in $\C^{2^n}$, which requires $2^{n+1}-2$ real numbers to be specified exactly ($2^n$ complex numbers, except that the absolute value of the last one is already determined by normalisation, and we can ignore an irrelevant overall phase). However, surprisingly, a number of known results suggest that the amount of classical information required to transmit a quantum state could actually be substantially less than this.

The most famous result of this nature is Holevo's Theorem~\cite{holevo73}, a corollary of which is that the classical information content of a quantum state of $n$ qubits is bounded by $n$ bits. A related result is a bound of Nayak~\cite{nayak99a} which implies that any quantum communication protocol which transmits $n$ bits with success probability $\delta > 0$ must send at least $n - \log_2 (1/\delta)$ qubits. Indeed, the number of qubits transmitted must be linear in $n$ even if we seek only to retrieve one of the bits with high probability~\cite{ambainis02a}. It was also shown by Aaronson~\cite{aaronson07} that to predict the outcomes of most measurements drawn from some probability distribution on an $n$-qubit state, one needs to make only $O(n)$ sample measurements drawn from the same distribution. This result motivated Aaronson to make the provocative statement that ``while the effective dimension of an $n$-qubit Hilbert space {\em appears} to be exponential in $n$, in the sense that is relevant for approximate learning and prediction, this appearance is illusory''.

One way to gain intuition for these results is to observe that having one copy of an $n$-qubit quantum state $\ket{\psi}$ does not allow the retrieval of up to around $2^n$ parameters precisely. For any measurement which we can perform on $\ket{\psi}$, if we instead applied it to $\ket{\widetilde{\psi}} \approx \ket{\psi}$, the distribution on measurement outcomes would be almost the same. Therefore, the most we can reasonably ask of a classical protocol designed to store or transmit $\ket{\psi}$ is to achieve what the quantum protocol itself does: allow approximate reproduction of any measurement which we could perform on $\ket{\psi}$.

We can express this task within the framework of a communication game. Imagine there are two parties (Alice and Bob), where Alice has a description of a pure quantum state $\ket{\psi}$ of $n$ qubits, and Bob has a description of a quantum measurement (POVM) $M$. Let $p_M(\ket{\psi})$ denote the probability distribution resulting from applying the measurement $M$ to $\ket{\psi}$. Then we ask that the classical protocol allows Alice and Bob to sample from a distribution $\widetilde{p}_M(\ket{\psi})$ such that $\| p_M(\ket{\psi}) - \widetilde{p}_M(\ket{\psi})\|_1 \le \epsilon$, for some $\epsilon > 0$, where $\|\cdot \|_1$ is the $\ell_1$ distance $\|v\|_1 = \sum_i |v_i|$. Call this task {\em Distributed Quantum Sampling} with inaccuracy $\epsilon$. Distributed Quantum Sampling can clearly be solved with $\epsilon=0$ by a quantum protocol communicating $n$ qubits: Alice just sends $\ket{\psi}$ to Bob, who measures according to $M$.

Distributed Quantum Sampling can also be solved with $O(2^n \log (1/\epsilon))$ bits of classical communication. It is sufficient for Alice to encode $\ket{\psi}$ using an $\epsilon$-net with respect to the trace norm~\cite{hayden04}, i.e.\ a set of states $\{ \ket{\psi_i}\}$ such that, for all $\ket{\psi}$, $\| \proj{\psi} - \proj{\psi_i} \|_1 \le \epsilon$ for some $i$. Then Alice sends the identity of the closest state in the $\epsilon$-net to Bob, who samples from the distribution corresponding to applying $M$ to that state. As there are $\epsilon$-nets of size $(5/\epsilon)^{2^{n+1}}$ for the space of pure states of $n$ qubits with respect to the trace norm~\cite{hayden04}, the identity of a state in the net can be transmitted using $O(2^n \log (1/\epsilon))$ bits of communication\footnote{Note that the complexity achieved by this approach is better than would be obtained by simply writing down $\ket{\psi}$ in the computational basis, and truncating each amplitude after some number of digits. To achieve sufficient precision with this approach requires specifying each amplitude up to precision $O(\epsilon / \sqrt{2^n})$, giving an overall communication complexity of $O(2^n \log (2^n/\epsilon))$.}. An $\epsilon$-net argument can also be used to deal with a technicality: to completely specify Alice and Bob's inputs would require an unbounded amount of classical information, leading to their protocols being only describable via a nonstandard computational model. Restricting Alice's input to be picked from an $\epsilon$-net reduces the amount of information required to specify it to $O(2^n)$ bits for constant $\epsilon > 0$, without substantially changing the complexity of the problem (if Distributed Quantum Sampling with inaccuracy $\epsilon$ can be solved for states in an $\epsilon$-net, this implies a protocol for Distributed Quantum Sampling for arbitrary states with inaccuracy $2\epsilon$). A similar $\epsilon$-net idea can be applied to discretise Bob's input, assuming that we restrict to the case where his measurement has a finite number of outcomes, as the set of POVMs on a finite-dimensional space with a fixed finite number of outcomes is compact~\cite{watrous18}.

It is easy to see by a volume argument that the result of~\cite{hayden04} is tight up to constant terms; that is, any $\epsilon$-net must have size at least $(c/\epsilon)^{2^n}$ for some constant $c>0$. But could we do better than this by using a protocol which is not based on simply discretising the space of quantum states via an $\epsilon$-net? Questions of this nature are studied in the field of quantum communication complexity~\cite{buhrman10}. One of the first results in this area was work of Buhrman, Cleve and Wigderson~\cite{buhrman98} which implies that, in our terminology, Distributed Quantum Sampling with $\epsilon=0$ requires $\Omega(2^n)$ bits of classical communication. Their result is based on proving an $\Omega(2^n)$ lower bound on the deterministic communication complexity of a distributed version of the Deutsch-Jozsa problem in quantum query complexity. However, the complexity of this problem drops to $O(1)$ if $\epsilon$ is allowed to be nonzero. 

Following this, a succession of results showed stronger separations between quantum and classical communication complexity. Raz~\cite{raz99} gave a communication task which could be solved by a two-way quantum protocol communicating $O(n)$ qubits, but which requires $\Omega(\sqrt{2^n})$ bits of classical communication. Bar-Yossef, Jayram and Kerenidis~\cite{baryossef08} showed that there is a communication task which can be solved by a {\em one-way} quantum protocol communicating $O(n)$ qubits, while any classical one-way bounded-error protocol must communicate $\Omega(\sqrt{2^n})$ bits. Gavinsky et al.~\cite{gavinsky08a} later proved a similar separation for a functional problem, i.e.\ one where Alice and Bob's task is to compute a function rather than sample from a distribution.

Finally, it was shown by Klartag and Regev that Distributed Quantum Sampling requires $\Omega(\sqrt[3]{2^n})$ classical bits of communication between Alice and Bob~\cite{klartag11}, even if the communication is allowed to be two-way. This improved a previous result of Gavinsky~\cite{gavinsky08b}, which implied that Distributed Quantum Sampling requires $\Omega(\sqrt[8]{2^n}/\sqrt{n})$ bits of classical two-way communication. The lower bound of~\cite{klartag11} is proven by considering a more restrictive problem known as the ``vector in subspace'' problem, which is defined as follows. Alice gets an $n$-qubit quantum state $\ket{\psi}$ and Bob gets a 2-outcome projective measurement $\{M,I-M\}$. Alice and Bob are promised that either $\braket{\psi|M|\psi} = 1$ or $\braket{\psi|M|\psi} = 0$; their task is to determine which is the case. This problem encompasses all one-way exact quantum protocols where Bob has two possible outputs~\cite{kremer95}.

Thus, all these results (and more discussed in Section \ref{sec:prior} below) leave open a natural question: could there exist a non-trivial classical protocol for Distributed Quantum Sampling for fixed $\epsilon>0$, i.e.\ one that transmits asymptotically less than $2^n$ bits? Such a protocol indeed exists for the vector in subspace problem: it was already shown by Raz~\cite{raz99} that this problem can be solved with bounded failure probability using $O(\sqrt{2^n})$ bits of classical one-way communication\footnote{This result was stated in~\cite{raz99} but the proof has not appeared. We include a proof in Appendix~\ref{app:raz}.}. Could the same be true for the more general Distributed Quantum Sampling problem? In the author's opinion, the interest in this question goes beyond simply finding a tight bound for this problem that would improve the best $\Omega(\sqrt[3]{2^n})$ lower bound known. On a fundamental, conceptual level, the question asks: are quantum states ``really'' like an exponentially-long string of numbers, or do they have a more efficient representation?


\subsection{Our results}

Here we show that any classical communication protocol for Distributed Quantum Sampling with sufficiently small constant inaccuracy $\epsilon > 0$ must transmit $\Omega(2^n)$ classical bits, even if the communication is allowed to be two-way and the parties are allowed shared randomness. This immediately implies that any classical method for storing an arbitrary quantum state such that measurements can be approximately simulated on that state must store $\Omega(2^n)$ classical bits. This can be seen as an ``anti-Holevo'' theorem: a quantum state of $n$ qubits can only store at most $n$ bits~\cite{holevo73}, but $\Omega(2^n)$ classical bits are required to store $n$ qubits.

This optimal lower bound is based on proving a quantum-classical separation for the following special case of Distributed Quantum Sampling, which we call Distributed Fourier Sampling:
\begin{itemize}
\item Alice is given a function $f:\{0,1\}^n \rightarrow \{\pm1\}$.
\item Bob is given a function $g:\{0,1\}^n \rightarrow \{\pm1\}$.
\item Their task is for one party (say Bob) to approximately sample from the distribution $p_{fg}$ on $n$-bit strings $s$ where
\be \label{eq:pfg} p_{fg}(s) = \left(\frac{1}{2^n} \sum_{x \in \{0,1\}^n} (-1)^{s \cdot x} f(x) g(x) \right)^2 \ee
and $s \cdot x = \sum_{i=1}^n s_i x_i$. That is, Bob must output a sample from any distribution $\widetilde{p}_{fg}$ such that $\| \widetilde{p}_{fg} - p_{fg} \|_1 \le \epsilon$, for some constant  inaccuracy $\epsilon$.
\end{itemize}
The title ``Distributed Fourier Sampling'' refers to the fact that the distribution which must be sampled from is the square of the Fourier transform of the function $fg(x) = f(x)g(x)$ over $\Z_2^n$.

For conciseness, we henceforth write $N=2^n$. Distributed Fourier Sampling can be solved with $n$ qubits of one-way communication and $\epsilon=0$. Alice constructs the state $\ket{\psi_f} = \frac{1}{\sqrt{N}}\sum_{x \in \{0,1\}^n} f(x) \ket{x}$ and sends it to Bob. Bob then applies the unitary operator defined by $U_g\ket{x} = g(x) \ket{x}$ to $\ket{\psi_f}$ to produce $\ket{\psi_{fg}} = \frac{1}{\sqrt{N}}\sum_{x \in \{0,1\}^n} f(x) g(x) \ket{x}$. Finally, Bob applies a Hadamard gate to each qubit of $\ket{\psi_{fg}}$ and measures in the computational basis. The resulting distribution is exactly $p_{fg}$.

By contrast, we have the following result:

\begin{thm}
\label{thm:main}
There exist universal constants $\epsilon, \gamma > 0$ such that, for sufficiently large $N$, any two-way classical communication protocol for Distributed Fourier Sampling with shared randomness and inaccuracy $\epsilon$ must communicate at least $\gamma N$ bits.
\end{thm}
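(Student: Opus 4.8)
The plan is to prove this lower bound by a reduction from a hard communication problem with a known $\Omega(N)$ lower bound, combined with the standard correlation/discrepancy framework for randomized communication complexity. The first step is to extract from the sampling task a Boolean decision problem: for a uniformly random string $s_0 \in \{0,1\}^n$, Bob should be able to distinguish the case where $p_{fg}$ is heavily concentrated on $s_0$ (say $p_{fg}(s_0) \ge 2/3$) from the case where $p_{fg}(s_0)$ is tiny (say $\le 1/3$), since an $\epsilon$-accurate sampler with small constant $\epsilon$ must in the first case output $s_0$ with probability $\ge 2/3 - \epsilon$. The point is that $p_{fg}(s_0)$ is concentrated precisely when $f(x)g(x) = \pm(-1)^{s_0 \cdot x}$ for (almost) all $x$, i.e.\ when the pointwise product $fg$ essentially equals a fixed character $\chi_{s_0}$. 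So I would have Alice and Bob jointly encode a hard instance $(u,v) \in \{0,1\}^N \times \{0,1\}^N$ into $f = (-1)^u \cdot \chi_{s_0}^{1/2\text{-ish}}$… more carefully: fix $s_0$, let Alice set $f(x) = (-1)^{u_x}(-1)^{s_0\cdot x}$ and Bob set $g(x) = (-1)^{v_x}$, so that $p_{fg}(s_0) = \big(\frac1N \sum_x (-1)^{u_x \oplus v_x}\big)^2 = (1 - 2|u \oplus v|/N)^2$. Thus the sampler, run on this encoding, lets Bob decide whether the Hamming distance $|u \oplus v|$ is close to $0$ (or $N$) versus close to $N/2$ — but this is exactly (a promise version of) the Gap-Hamming-Distance problem, which has randomized two-way communication complexity $\Omega(N)$ by the result of Chakrabarti–Regev (or, in the exact-equality limit, the trivially hard Equality-type instances handled via a more careful gap construction).

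The main subtlety is that a single fixed $s_0$ only forces the sampler to be accurate about the probability mass on that one point, and $\epsilon$-accuracy in $\ell_1$ does not by itself pin down $p_{fg}(s_0)$ unless the distribution is genuinely concentrated there. So the reduction has to be set up so that in the ``far'' case the mass is spread out enough that the sampler is unlikely to guess $s_0$, which is automatic here because when $|u \oplus v| \approx N/2$ the Fourier coefficient at $s_0$ is $o(1)$ and, more importantly, by symmetry no particular string is favored. To make the decision problem robust I would actually use the shared randomness to pick $s_0$ uniformly (and absorb it into a relabeling $x \mapsto x$, keeping it shared), then argue that a protocol solving Distributed Fourier Sampling for \emph{all} inputs in particular solves it on this distribution of encoded inputs, giving a protocol for Gap-Hamming-Distance of the same communication cost. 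Contrapositively, the $\Omega(N)$ lower bound for GHD transfers, yielding universal constants $\epsilon,\gamma$ as claimed.

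The step I expect to be the main obstacle is making the ``decision-to-sampling'' reduction quantitatively tight with a \emph{single} universal constant $\epsilon$: I need that an $\epsilon$-close sampler detects the gap between $p_{fg}(s_0) = (1-2\alpha)^2$ for $\alpha$ near $0$ versus $\alpha$ near $1/2$, and the cleanest way is to choose the GHD thresholds so that $p_{fg}(s_0) \ge 3/4$ on ``yes'' instances while $\sum_{s} \widetilde p_{fg}(s)^{\text{(guess $s_0$)}} \le 1/4$ on ``no'' instances. For the ``yes'' side this is just $(1-2\alpha)^2 \ge 3/4$, i.e.\ $\alpha \le (1-\sqrt{3}/2)/2 \approx 0.067$; for the ``no'' side I need the probability the sampler outputs the specific string $s_0$ to be small, which follows if $p_{fg}(s_0) \le 1/4 - \epsilon$, i.e.\ $\alpha \ge 1/4$ roughly, well inside the GHD promise. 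Then any constant $\epsilon < 1/4$ works, and the randomized two-way complexity of this promise-GHD is $\Omega(N)$. An alternative route, avoiding GHD entirely, is to reduce directly from a linear lower bound on the randomized complexity of Disjointness or Inner-Product via the same encoding $p_{fg}(s_0) = (1 - 2|u\oplus v|/N)^2$ together with a discrepancy bound on the $\pm1$ matrix $\big((1-2|u\oplus v|/N)^2\big)_{u,v}$; this is essentially the matrix whose spectrum is controlled by the Krawtchouk polynomials, and I would fall back on this if the GHD reduction's promise parameters turn out not to line up. Either way the heart of the argument is the elementary identity turning the Fourier coefficient at $s_0$ into a normalized Hamming inner product, and the rest is importing an existing linear communication lower bound.
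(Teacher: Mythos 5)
There is a genuine gap, and it sits exactly where you flagged your ``main obstacle''. Your reduction encodes a Gap-Hamming-type instance so that the sampler must distinguish $p_{fg}(s_0)=(1-2\alpha)^2\ge 3/4$ (i.e.\ disagreement fraction $\alpha\lesssim 0.07$ or $\gtrsim 0.93$) from $p_{fg}(s_0)\le 1/4-\epsilon$ (i.e.\ $\alpha\in[1/4,3/4]$). But this promise problem has a \emph{constant-fraction} gap, and with shared randomness it is solvable with $O(1)$ bits: Alice and Bob sample a constant number of common coordinates and estimate the disagreement fraction to within $\pm 0.09$. The Chakrabarti--Regev $\Omega(N)$ bound applies only to the regime where the gap is $\Theta(\sqrt{N})$ around $N/2$, i.e.\ $|\langle x,y\rangle|\approx\sqrt{N}$. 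In that hard regime, however, $p_{fg}(s_0)=(\langle x,y\rangle/N)^2=\Theta(1/N)$ on both sides of the promise, so an $\ell_1$-accurate sampler with \emph{constant} $\epsilon$ carries essentially no usable information about the mass at a single string $s_0$: the allowed error $\epsilon$ dwarfs the $O(1/N)$ signal, and your ``decision-to-sampling'' step collapses. So either the target decision problem is easy (constant gap) or the sampler cannot be converted into a decider for it (gap $\sqrt{N}$); there is no parameter choice for which a black-box reduction to GHD with an additive-constant success gap goes through. Your fallback via discrepancy of the matrix $\bigl((1-2|u\oplus v|/N)^2\bigr)_{u,v}$ runs into the same issue: the quantity being decided has magnitude $\Theta(1/N)$, so any bound must handle acceptance probabilities of order $1/N$ separated only multiplicatively.

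This is precisely the barrier the paper discusses: the natural decision problem extracted from the sampler is a gap-orthogonality variant in which accept/reject probabilities are $O(1/N)$ and separated by a multiplicative constant, and such a lower bound cannot be obtained from the corruption method (by G\"o\"os--Watson equivalence, since there is an explicit $O(\sqrt{N})$-bit protocol achieving a multiplicative gap), nor can one even assume the sampler yields a deterministic protocol accurate on most inputs. The paper instead extracts only an \emph{average-case} guarantee --- a two-outcome protocol whose acceptance probability is within $\epsilon/N$ on average of $(\langle x,y\rangle/N)^2$ --- and then argues directly about rectangles, proving a skewed anticoncentration bound (a generalisation of the Chakrabarti--Regev Gaussian-measure argument, with the $e^{\pm s}$ weights) under three correlated distributions $\xi_{-\sqrt{b/N}},\xi_0,\xi_{\sqrt{b/N}}$ to derive contradictory bounds on the acceptance probability. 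So the ingredient your proposal is missing is not the encoding identity (which is fine and also underlies the paper's reduction) but a mechanism for turning an $O(1/N)$-scale, multiplicatively separated signal into a communication lower bound; importing the off-the-shelf GHD bound does not supply it.
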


Theorem \ref{thm:main} implies an optimal lower bound on Distributed Quantum Sampling for $\epsilon = \Omega(1)$, matching the general upper bound of $O(N \log (1/\epsilon))$ bits. It remains open to prove a tight bound when $\epsilon = o(1)$. However, note that a lower bound of $\Omega(N + \log(1/\epsilon))$ does hold, because an $\Omega(\log (1/\epsilon))$ bound is straightforward. If we let Alice's state be $\alpha \ket{0} + \beta \ket{1}$, and Bob's measurement is fixed to be a computational basis measurement, sampling from the corresponding probability distribution up to inaccuracy $\epsilon$ is effectively the same as transmitting $|\alpha|^2$ up to inaccuracy $O(\epsilon)$. This requires transmitting $\Omega(\log (1/\epsilon))$ bits.


\subsection{Consequences of the lower bound}

Theorem \ref{thm:main} immediately implies similar separations in related models.

{\bf Query complexity of sampling problems.} We can use Distributed Fourier Sampling to obtain a lower bound in the query model on the classical complexity of the (non-distributed) Fourier Sampling problem~\cite{bernstein97,aaronson15a}. In this problem, we are given the ability to query (evaluate) an unknown function $h:\{0,1\}^n \rightarrow \{\pm1\}$, which corresponds to an input of size $N$. Our task is to approximately sample from the Fourier spectrum of $h$, i.e.\ the distribution $p_h$ on bit-strings $s \in \{0,1\}^n$ where
\[ p_h(s) = \left(\frac{1}{2^n} \sum_{x \in \{0,1\}^n} (-1)^{s \cdot x} h(x) \right)^2. \]
More precisely, we are asked to output a sample from any distribution $\widetilde{p}_h$ such that $\| \widetilde{p}_h - p_h \|_1 \le \epsilon$. This problem can be solved exactly with 1 quantum query to $h$ by constructing the state $\frac{1}{\sqrt{2^n}} \sum_{x \in \{0,1\}^n} h(x) \ket{x}$, applying a Hadamard gate to each qubit, and measuring in the computational basis.

Any randomised classical query algorithm solving Fourier Sampling using $t$ queries immediately implies a classical two-way communication protocol with shared randomness for Distributed Fourier Sampling communicating at most $2t$ bits, via a standard reduction~\cite{kushilevitz97}. Alice and Bob simulate the procedure for Fourier Sampling, and whenever they want to query $h(x)$, they replace the query with evaluating $f(x)g(x)$ using 2 bits of communication\footnote{Note that, unlike in the setting of quantum query algorithms, it is not necessary for Alice and Bob to send each other their query indices. The classical algorithm can be expressed as a distribution over deterministic decision trees. Alice and Bob choose which deterministic tree to implement using shared randomness, and from that point onwards, the values of the queried bits completely determine the algorithm's behaviour.}. Therefore, Theorem \ref{thm:main} implies a corresponding lower bound on the query complexity of Fourier Sampling:

\begin{cor}
\label{cor:query}
For sufficiently small constant $\epsilon > 0$, any randomised classical algorithm solving Fourier Sampling on $N$ input bits must make $\Omega(N)$ queries.
\end{cor}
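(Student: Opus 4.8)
The plan is to prove Corollary~\ref{cor:query} by the standard query-to-communication reduction~\cite{kushilevitz97}, already outlined above, combined with Theorem~\ref{thm:main}. Suppose $\mathcal{A}$ is a randomised classical query algorithm that, for every $h:\{0,1\}^n\to\{\pm1\}$, outputs a sample from some distribution $\epsilon$-close in $\ell_1$ to $p_h$ using at most $t$ queries, where $\epsilon>0$ is the universal constant supplied by Theorem~\ref{thm:main}. I would turn $\mathcal{A}$ into a two-way classical protocol with shared randomness for Distributed Fourier Sampling with inaccuracy $\epsilon$ that communicates at most $2t$ bits; the claimed bound $t=\Omega(N)$ then follows immediately, since Theorem~\ref{thm:main} forces $2t\ge\gamma N$ for sufficiently large $N$, i.e.\ $t\ge(\gamma/2)N$.

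First I would fix the reduction on an arbitrary instance $(f,g)$ of Distributed Fourier Sampling and set $h(x):=f(x)g(x)$; by definition $p_h=p_{fg}$, so any distribution $\epsilon$-close to $p_h$ is exactly what Distributed Fourier Sampling requires Bob to sample from. Next, write $\mathcal{A}$ as a probability distribution over deterministic decision trees, its internal randomness determining which tree is executed. Alice and Bob use their shared randomness to sample the same tree $T$; from then on the computation is deterministic given the query answers. They simulate $T$ jointly: since both parties know $T$ and all answers revealed so far, both can compute the next query index $x$ with no communication, so each query is resolved by Alice sending $f(x)$ to Bob ($1$ bit) and Bob sending $g(x)$ back ($1$ bit), after which both know $h(x)=f(x)g(x)$. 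After at most $t$ queries $T$ outputs a string $s$, which both parties can compute; Bob outputs it. The output distribution of this protocol is identical to that of $\mathcal{A}$ run on input $h$, hence $\epsilon$-close to $p_h=p_{fg}$, and the total communication is at most $2t$ bits.

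There is essentially no obstacle here: the only point needing a moment's care is that the simulation costs just $2$ bits per query rather than, say, $2(n+1)$ bits for also exchanging the query index --- this is precisely the content of the footnote above, and it relies on having reduced $\mathcal{A}$ to a fixed deterministic tree via shared randomness before the simulation begins, so that both parties can recompute each query index locally. Plugging the resulting protocol into the bound $2t\ge\gamma N$ of Theorem~\ref{thm:main} yields $t\ge(\gamma/2)N=\Omega(N)$, completing the proof.
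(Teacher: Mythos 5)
Your proposal is correct and is essentially identical to the paper's own argument: the paper proves Corollary~\ref{cor:query} by exactly this reduction, simulating a distribution over deterministic decision trees chosen via shared randomness so that each query to $h(x)=f(x)g(x)$ costs only $2$ bits, and then invoking Theorem~\ref{thm:main} to get $2t\ge\gamma N$. Nothing is missing.
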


A lower bound of $\Omega(N/\log N)$ queries on Fourier Sampling was previously shown by Aaronson and Ambainis~\cite{aaronson15a}, who conjectured (on p44 of the first version of the paper on the arXiv) that this bound was tight, not only for Fourier Sampling, but for {\em all} sampling problems that can be solved with 1 quantum query. Corollary \ref{cor:query} refutes this conjecture. Interestingly, it was also shown by the same authors that any partial boolean function which can be computed by a bounded-error quantum algorithm making $t=O(1)$ queries can be computed by a bounded-error classical algorithm making $O(N^{1-1/(2t)})$ queries~\cite{aaronson15a}. Thus, to see maximal quantum-classical query separations, we are required to go beyond computing boolean functions.

Following the completion of an initial version of this work, I learned that Scott Aaronson and Lijie Chen have recently obtained an independent proof of Corollary \ref{cor:query}~\cite{aaronson16}. (Their proof technique is compared with the present one in Section \ref{sec:sketch} below.) Also note that it is much simpler to prove Corollary \ref{cor:query} for deterministic classical algorithms, i.e.\ ones that choose which queries to make via a deterministic decision tree, then sample from some distribution depending on the results of the queries. A proof for this special case is included in Appendix~\ref{app:query}.

{\bf Nonlocality problems.} Using a standard mapping between communication protocols and entanglement~\cite[Section IV]{buhrman10}, we can obtain a distribution $\mathcal{D}$ which can be sampled from exactly with no communication between the parties if Alice and Bob share $n$ Bell pairs, but such that any classical procedure for sampling from $\mathcal{D}$ up to distance $\epsilon$ in $\ell_1$ norm, for some constant $\epsilon > 0$, requires $\Omega(2^n)$ bits of classical communication. A similar lower bound for {\em exact} sampling from the same distribution $\mathcal{D}$ was previously shown by Brassard, Cleve and Tapp~\cite{brassard99}, but the bounded-error case was called ``an important open question'' by Toner and Bacon~\cite{toner03}.

To obtain $\mathcal{D}$, we define a variant of the Distributed Fourier Sampling problem. Alice and Bob are again each given a function, $f$ and $g$ respectively. However, this time they are asked to sample from a distribution on pairs of bit-strings $s,t \in \{0,1\}^n$ (where Alice outputs $s$, Bob outputs $t$) of the following form, up to $\ell_1$ distance $\epsilon$: a distribution where $\Pr[s \oplus t = u] = p_{fg}(u)$ for all $u \in \{0,1\}^n$, where $p_{fg}$ is defined as in (\ref{eq:pfg}). We call this problem Doubly Distributed Fourier Sampling (DDFS).

The quantum protocol for DDFS proceeds as follows. Alice and Bob share a maximally entangled state $\frac{1}{\sqrt{N}} \sum_{x \in \{0,1\}^n} \ket{x}\ket{x}$. Alice and Bob each apply the unitary operators $U_f$ and $U_g$, defined by $U_f\ket{x} = f(x) \ket{x}$, $U_g\ket{x} = g(x) \ket{x}$, to their half of the state to produce the state $\ket{\phi_{fg}} = \frac{1}{\sqrt{N}} \sum_{x \in \{0,1\}^n} f(x) g(x) \ket{x}\ket{x}$. They each then apply Hadamard gates to each qubit of their half of $\ket{\phi_{fg}}$, measure in the computational basis, and output the result.

The final state produced before measuring is
\[ \frac{1}{N^{3/2}} \sum_{x \in \{0,1\}^n} f(x)g(x) \left(\sum_{s \in \{0,1\}^n} (-1)^{s \cdot x} \ket{s} \right) \left(\sum_{t \in \{0,1\}^n} (-1)^{t \cdot x} \ket{t} \right), \]
so for each pair of bit-strings $(s,t)$ such that $s\oplus t = u$, the probability that Alice and Bob see that pair of bit-strings is
\[ \frac{1}{N^3} \left( \sum_{x \in \{0,1\}^n} (-1)^{x \cdot (s \oplus t)} f(x)g(x) \right)^2 = \frac{p_{fg}(u)}{N}. \]
Therefore, the probability that they see some pair of bit-strings $(s,t)$ such that $s \oplus t = u$ is exactly $p_{fg}(u)$. On the other hand, any classical communication protocol for DDFS approximating the output distribution up to $\ell_1$ inaccuracy $\epsilon$ with cost $c$ gives a protocol for Distributed Fourier Sampling up to inaccuracy $\epsilon$ with cost $c + n$. This protocol proceeds as follows: after receiving outcomes $(s,t)$ from their DDFS protocol, Alice sends $s$ to Bob, who outputs $s \oplus t$. If the probability that Alice and Bob output $(s,t)$ was $\widetilde{q}_{fg}(s,t)$, then as the DDFS protocol achieved $\ell_1$ distance at most $\epsilon$ from a distribution where $\Pr[s \oplus t = u] = p_{fg}(u)$ for all $u \in \{0,1\}^n$, we have
\[ \sum_{s,t \in \{0,1\}^n} | q_{fg}(s,t) - \widetilde{q}_{fg}(s,t) | \le \epsilon \]
for some distribution $q_{fg}$ such that for all $u \in \{0,1\}^n$, $\sum_{s,t: s \oplus t=u} q_{fg}(s,t) = p_{fg}(u)$. So the $\ell_1$ distance between the output distribution and the desired distribution is
\beas \sum_{u \in \{0,1\}^n} \left| p_{fg}(u) - \sum_{s,t:s\oplus t = u} \widetilde{q}_{fg}(s,t) \right| &=& \sum_{u \in \{0,1\}^n} \left| \sum_{s,t:s \oplus t=u} \frac{q_{fg}(s,t)}{p_{fg}(u)} p_{fg}(u) - \widetilde{q}_{fg}(s,t) \right|\\
&\le& \sum_{u \in \{0,1\}^n} \sum_{s,t:s \oplus t=u} | q_{fg}(s,t) - \widetilde{q}_{fg}(s,t) |\\
&=& \sum_{s,t: s,t \in \{0,1\}^n} | q_{fg}(s,t) - \widetilde{q}_{fg}(s,t) | \le \epsilon.
\eeas
We have obtained the following corollary:

\begin{cor}
\label{cor:ddfs}
For sufficiently small constant $\epsilon > 0$, any two-way classical communication protocol with shared randomness for Doubly Distributed Fourier Sampling must communicate $\Omega(N)$ bits.
\end{cor}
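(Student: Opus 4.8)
The plan is simply to package the reduction from Doubly Distributed Fourier Sampling to Distributed Fourier Sampling that was just described, and then invoke Theorem \ref{thm:main}. Concretely, suppose for contradiction that some two-way classical protocol $\mathcal{P}$ with shared randomness solves DDFS with inaccuracy $\epsilon$ (taking $\epsilon$ to be the constant appearing in Theorem \ref{thm:main}) using $c$ bits of communication. I would build a protocol $\mathcal{P}'$ for Distributed Fourier Sampling as follows: on inputs $f$ and $g$, Alice and Bob run $\mathcal{P}$ to obtain outputs $s$ and $t$ respectively; Alice then sends the $n$-bit string $s$ to Bob, who outputs $s \oplus t$. The total communication is $c + n$ bits.

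The one thing to check is that $\mathcal{P}'$ has inaccuracy $\epsilon$ for Distributed Fourier Sampling, and this is exactly the chain of inequalities displayed above: the output distribution of $\mathcal{P}'$ is the image of the joint output distribution $\widetilde{q}_{fg}$ of $\mathcal{P}$ under the map $(s,t) \mapsto s \oplus t$; passing to such an image does not increase $\ell_1$ distance; and the image of the nearby distribution $q_{fg}$ (which is within $\ell_1$ distance $\epsilon$ of $\widetilde{q}_{fg}$) under this map is precisely $p_{fg}$, by the defining property of DDFS. Hence $\mathcal{P}'$ samples from a distribution within $\ell_1$ distance $\epsilon$ of $p_{fg}$, i.e.\ it solves Distributed Fourier Sampling with inaccuracy $\epsilon$ and cost $c+n$.

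Theorem \ref{thm:main} then forces $c + n \ge \gamma N$ once $N$ is sufficiently large, so $c \ge \gamma N - n = \Omega(N)$, since $n = \log_2 N$. I expect no genuine obstacle here: the only subtlety is confirming that marginalising a distribution is an $\ell_1$-contraction, so that the reduction preserves the inaccuracy exactly rather than inflating it, and that the additive $n$ term is negligible against the $\Omega(N)$ bound. The corollary therefore holds with the same constant $\epsilon$ as in Theorem \ref{thm:main}, and with the same leading constant $\gamma$ up to the vanishing $n = o(N)$ correction.
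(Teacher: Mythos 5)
Your proposal is correct and follows essentially the same route as the paper: run the DDFS protocol, have Alice send $s$ so Bob can output $s \oplus t$, note that the XOR pushforward cannot increase the $\ell_1$ distance (so the inaccuracy $\epsilon$ is preserved and the pushforward of $q_{fg}$ is exactly $p_{fg}$), and then apply Theorem \ref{thm:main} to the resulting $(c+n)$-bit protocol, with $n = \log_2 N$ negligible against $\Omega(N)$.
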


It is clear that the bound stated in Corollary \ref{cor:ddfs} is tight up to constant factors, as Alice can send her Doubly Distributed Fourier Sampling input to Bob using $O(N)$ bits of communication. This bound is also best possible for the following more general problem: Alice and Bob initially share an arbitrary bipartite entangled state where each party has local dimension $N$, are each given as input a local measurement of rank-1 projectors, and are asked to sample from the joint distribution on measurement outcomes up to constant inaccuracy $\epsilon > 0$. This task can be achieved classically using $O(N)$ bits of communication, via a similar $\epsilon$-net construction as for Distributed Quantum Sampling. Alice simulates her measurement on her half of the state, after which she knows Bob's reduced state. She then uses a shared $\epsilon$-net to represent this state up to accuracy $\epsilon$, and sends the identity of the state in the $\epsilon$-net to Bob, who can then simulate his own measurement. For constant $\epsilon > 0$, it is sufficient to transmit $O(N)$ bits.

Finally, we remark that the results given here on Fourier Sampling and its distributed variant are connected to some of the earliest works on quantum computation: the first exponential separation between exact quantum and classical algorithms via the Deutsch-Jozsa algorithm~\cite{deutsch92}, the first super-polynomial separation between quantum and randomised classical algorithms via recursive Fourier sampling~\cite{bernstein97}, and the first exponential separations between exact quantum and classical communication complexity via the distributed version of the Deutsch-Jozsa algorithm~\cite{buhrman98,brassard99}. It is remarkable that, around 20 years after these pioneering results, Fourier Sampling continues to be a rich vein from which quantum-classical separations can be mined.


\subsection{Sketch of the proof of the main result}
\label{sec:sketch}

In the remainder of the paper, we prove the claimed lower bound on the classical communication complexity of Distributed Fourier Sampling. We first give an outline of the proof. The starting point is to observe that, if there is a classical protocol for approximately sampling from $p_{fg}$ up to constant inaccuracy $\epsilon$, there is a classical protocol with two outcomes (accept or reject) which communicates the same number of bits and accepts with probability very close to
\be \label{eq:accprot} \left( \frac{\braket{f,g}}{N}\right) ^2 \coloneqq \left(\frac{1}{N} \sum_{x \in \{0,1\}^n} f(x) g(x) \right)^2 \ee
for most inputs $f$, $g$. More specifically, there is a protocol which accepts with probability $\widetilde{p}(f,g)$ such that the average of $|\widetilde{p}(f,g) - (\braket{f,g}/N)^2|$ over uniformly random $f$ and $g$ is at most $\epsilon/N$. A similar idea was used in~\cite{aaronson15a,aaronson16} in the setting of query complexity. We present the reduction in the proof of Lemma \ref{lem:prototoaccprob}.

Next, we show that for sufficiently small $\epsilon > 0$, any classical protocol whose acceptance probability is this close to $(\braket{f,g}/N)^2$ on average over $f$ and $g$ must communicate $\Omega(N)$ bits. It is convenient to now switch notation and consider the problem of accepting with probability close to $(\braket{x,y}/N)^2$ for strings $x,y \in \{\pm1\}^N$. The first challenge in proving a lower bound is that it is hard to reason about protocols for sampling problems. In particular, one cannot necessarily assume that the existence of a bounded-error randomised protocol for a sampling problem implies the existence of a deterministic protocol for the same problem that is accurate on most inputs\footnote{A previous version of this paper claimed incorrectly that such an implication did hold.}. This is by contrast with the case of communication complexity of functional (or relational) problems~\cite{kushilevitz97}, where this connection is standard and is a typical first step in the proof of lower bounds.

We instead need to reason directly about the acceptance probability of any low-communication protocol $P$. Our high-level strategy is to find three distributions $\mathcal{D}_-$, $\mathcal{D}_0$, $\mathcal{D}_+$ on inputs such that we can upper-bound the acceptance probability $p_0$ of $P$ under $\mathcal{D}_0$ in terms of a weighted sum of its acceptance probabilities $p_-$, $p_+$ under $\mathcal{D}_-$, $\mathcal{D}_+$. Assuming that $P$ is accurate allows us to find an upper bound on each of these two acceptance probabilities (and hence an upper bound on $p_0$), as well as a lower bound on $p_0$. If the upper bound is lower than the lower bound, we have obtained a contradiction and there hence cannot exist an accurate low-communication protocol.

To implement this strategy mathematically, we use the fact that a communication protocol can be understood in terms of rectangles, i.e.\ subsets of Alice and Bob's inputs of the form $R = A \times B$. First, it is not difficult to see that the only rectangles that contribute appreciably to the protocol's average acceptance probability under some distribution are those that are large under that distribution. Next we want to show that, for any large rectangle that contributes substantially to the acceptance probability under some distribution $\mathcal{D}_0$, that rectangle must contribute even more substantially to the acceptance probabilities under some other distributions $\mathcal{D}_-$, $\mathcal{D}_+$.

At this point, we can make a connection to a remarkably powerful previous technique used to determine the communication complexity of the so-called gap-Hamming and gap-orthogonality problems~\cite{chakrabarti12,chakrabarti12a,vidick12,sherstov12}. In the gap-Hamming problem Alice and Bob are each given strings $x,y\in \{\pm1\}^N$, under the promise that either $\braket{x,y} \le -\sqrt{N}$ or $\braket{x,y} \ge \sqrt{N}$, where as above $\braket{x,y} = \sum_i x_i y_i$; their goal is to determine which is the case. It was first shown by Chakrabarti and Regev that the communication complexity of the gap-Hamming problem is $\Omega(N)$~\cite{chakrabarti12}. The key technical step of their bound involved giving three distributions $\xi_0$, $\xi_-$, $\xi_+$ on pairs of strings $x,y \in \{\pm1\}^N$ such that, for any large rectangle $R$,
\be \label{eq:crsimple} \frac{1}{2} \left( \xi_-(R) + \xi_+(R) \right) \ge \frac{2}{3} \xi_0(R). \ee
The distributions $\xi_-$ and $\xi_+$ are concentrated on strings with relatively large and small (respectively) Hamming distance; indeed, under these distributions $x$ is uniformly random, while for each $i$, $y_i = x_i$ with probability $(1 \pm p)/2$, for some small $p$. $\xi_0$ is just the uniform distribution. Thus (\ref{eq:crsimple}) can be seen as an anticoncentration bound: any large rectangle (i.e.\ any rectangle that contains many pairs $(x,y)$ with respect to the uniform distribution) must contain many pairs $(x,y)$ such that $|\braket{x,y}|$ is large.

This appears to give us exactly what we need to complete the proof. However, there are some difficulties left to surmount. First, the corresponding acceptance probabilities of a good sampler under the distributions $\xi_0$, $\xi_-$, $\xi_+$ used in~\cite{chakrabarti12} are not the ones that we need to use to prove the desired contradiction. To address this, Alice and Bob randomly flip some of their input bits in a suitably correlated way, which corresponds to shifting the expected acceptance probability $(\braket{x,y}/N)^2$ under each of the distributions. Even after doing this, the bound (\ref{eq:crsimple}) turns out to not be enough to imply a contradiction. We instead need a skewed version of the bound:
\be \label{eq:crsimpleskewed} \frac{1}{2} \left( \eee^s \xi_-(R) + \eee^{-s} \xi_+(R) \right) \ge \frac{2}{3} \xi_0(R) \ee
for large rectangles $R$ and arbitrary $s \in \R$. Intuitively, in this expression we are improving the upper bound on $\xi_0$ in terms of $\xi_+$, at the expense of making the bound in terms of $\xi_-$ worse. It turns out to indeed be possible to prove such a generalised bound based on the techniques of~\cite{chakrabarti12}\footnote{For experts, we remark that the ultimate reason for this is that the bound of~\cite{chakrabarti12} hinges on the behaviour of the function $\cosh(z)$, while here we need to understand the function $\cosh(z-s)$, which is very similar; and that it might also be possible to use a suitably generalised and tightened version of bounds proven in~\cite{chakrabarti12a} to obtain a similar result.}. This generalised bound could find applications elsewhere.

The final issue to deal with is apparently a technicality, but perhaps an important technicality. We have from the accuracy constraint on the Distributed Fourier Sampling protocol that the expected difference between the protocol's acceptance probability and $(\braket{x,y}/N)^2$ over uniformly random $x$ and $y$ is at most $\epsilon/N$. However, for the above bounds to be meaningful we need the protocol to also have a comparable level of accuracy when $(x,y)$ are picked from the distributions $\xi_+$, $\xi_-$. But the distributions $\xi_+$, $\xi_-$ we use are relatively far from the uniform distribution in total variation distance, implying that an expected error of $\epsilon/N$ under the distribution $\xi_0$ may not translate into a small error under the distributions $\xi_+$, $\xi_-$. To address this, Alice and Bob map their inputs $(x,y)$ to new inputs $(x',y')$ that are larger by a constant factor but preserve the inner product $\braket{x,y}$. This allows us to show that all probabilities in the corresponding distributions $\xi'_{\pm}$ on the larger inputs are at most a constant multiple (depending on $\epsilon$) of their corresponding probabilities under the uniform distribution, so the accuracy bound can be translated across.

We finally remark on connections between the proof techniques of the tight bound given here on the communication complexity of Distributed Fourier Sampling, and the tight bound of Aaronson and Chen on the query complexity of Fourier Sampling~\cite{aaronson16}. At a high level, Aaronson and Chen's strategy is also to prove hardness of accepting with probability close to $(\braket{x,y}/N)^2$, based on showing contradictory bounds on the acceptance probability of any classical algorithm that makes few queries, under three different input distributions. As query algorithms are much more restrictive than communication protocols, this allows the acceptance probability of an algorithm in this setting to be written as a weighted sum of certain functions of binomial coefficients. To complete the proof strategy of~\cite{aaronson16} it is then sufficient to carry out some technical calculations to bound these coefficients. In the communication complexity context, it seems necessary to prove a more general bound of the form of (\ref{eq:crsimpleskewed}), which comes with significant technical complications. For example, it does not seem obvious how to prove a variant of (\ref{eq:crsimpleskewed}) corresponding to the distributions used in~\cite{aaronson16}.


\subsection{Barriers to use of previous results}

It is instructive to check why we cannot just use existing communication complexity results to easily prove a lower bound on Distributed Fourier Sampling. First, by the $O(\sqrt{2^n})$ classical upper bound of Raz~\cite{raz99} for any partial boolean function, we can see that we need to look beyond protocols whose acceptance and rejection probabilities are separated by an additive constant. Looking at (\ref{eq:accprot}), one might naturally guess that a tight lower bound could be proven by considering inputs obeying the promise that (for example) either $\hsip{f}{g} = 0$, or $|\hsip{f}{g}| \ge 3\epsilon\sqrt{N}$. In the former case, the protocol should accept with probability at most $\epsilon/N$, and in the latter case the protocol should accept with probability at least $(3\epsilon-\epsilon)/N = 2\epsilon/N$. So there is a multiplicative constant separating the acceptance probabilities in the two cases.

This is a variant of the gap-orthogonality problem~\cite{sherstov12,chakrabarti12a}, for which an $\Omega(N)$ lower bound is known in the case where acceptance and rejection probabilities are separated by an {\em additive} constant. So extending this result to hold in the nonstandard setting where acceptance and rejection probabilities are small and separated by a multiplicative constant would suffice to prove our desired result. Further, G\"o\"os and Watson~\cite{goos14} showed that communication lower bounds of this form can be proven using the corruption method~\cite{yao83,beame07,klauck03}, an important lower bound technique in communication complexity, used in particular by Sherstov to prove his $\Omega(N)$ lower bound for gap-orthogonality~\cite{sherstov12}. If we had a direct corruption bound of $\Omega(N)$ for the gap-orthogonality problem as stated above, this would imply the result we need.

But in fact such a bound cannot exist, because G\"o\"os and Watson also showed that the corruption bound is {\em equivalent} to the communication complexity of an optimal protocol whose acceptance and rejection probabilities can be arbitrarily small, but are separated by a multiplicative constant~\cite{goos14}. And there is indeed a nontrivial protocol of this form for gap-orthogonality: query $\sqrt{N}$ random bits of Alice and Bob's inputs, and accept if they are all different, or all the same. If Alice and Bob's strings differ at a $1/2 + \Delta/\sqrt{N}$ fraction of positions, the probability of acceptance is precisely
\[ \left(\frac{1}{2} + \frac{\Delta}{\sqrt{N}} \right)^{\sqrt{N}} + \left(\frac{1}{2} - \frac{\Delta}{\sqrt{N}} \right)^{\sqrt{N}} = \frac{1}{2^{\sqrt{N}}} \left(\left(1 + \frac{2\Delta}{\sqrt{N}} \right)^{\sqrt{N}} + \left(1 - \frac{2\Delta}{\sqrt{N}} \right) ^{\sqrt{N}} \right). \]
This is roughly equal to $2^{-\sqrt{N}}(\eee^{2\Delta} + \eee^{-2\Delta}) = 2^{1-\sqrt{N}} \cosh(2\Delta)$. For $\Delta$ values separated by a constant factor, the acceptance probabilities will also be separated by a constant factor (which can be made arbitrarily large by taking the AND of multiple runs).

It therefore does not seem possible to use gap-orthogonality to prove the desired lower bound via standard techniques. So how did Sherstov prove a corruption bound of $\Omega(N)$ for gap-orthogonality? He considered the negation of the problem we consider here (i.e.\ reversing the roles of acceptance and rejection), which {\em does} have such a lower bound. However, it is not clear how to use this problem to prove a bound on the original Fourier sampling task, as the acceptance probabilities of the quantum protocol do not correspond directly to the negated problem.


\subsection{Other related prior work}
\label{sec:prior}

The Distributed Quantum Sampling problem has also been studied in the physics literature, where it is sometimes termed ``classical teleportation''. Toner and Bacon~\cite{toner03} showed that, in the case where Bob makes a projective measurement, Distributed Quantum Sampling on one qubit can be solved exactly with 2 bits of communication from Alice to Bob (and shared randomness). An asymptotic protocol which encompasses POVMs and which uses slightly more communication was previously proposed by Cerf, Gisin and Massar~\cite{cerf00a}. Montina~\cite{montina13} gave an efficient classical protocol for the special case where Bob's measurement consists of two operators: the projector onto a pure state, and its complement (a similar result can be obtained from work of Kremer, Nisan and Ron~\cite{kremer99}). Montina, Pfaffhauser and Wolf~\cite{montina13a}, and Montina and Wolf~\cite{montina14}, related the asymptotic and one-shot communication complexities of exact classical simulation of general quantum channels to convex optimisation problems. The related problem of simulating the correlations obtained by measuring entangled states has also been studied by a number of authors; see~\cite{brunner14,buhrman10} for surveys.

Galv\~ao and Hardy~\cite{galvao03} considered a variant of the Distributed Quantum Sampling problem, where Alice starts with a qubit in the state $\ket{0}$, the qubit is sent to Bob via a communication channel in which it undergoes a number of small rotations, and Bob is then asked to determine whether the final state of the qubit is $\ket{0}$ or $\ket{1}$, given that one of these is promised to be the case. Galv\~ao and Hardy argued that any perfect classical simulation of this quantum protocol must use a system which can have infinitely many states, hence must transmit infinitely many bits of information. (See~\cite{brierley15} for a related result applied to the study of non-classical temporal correlations.) However, this lower bound does not hold for approximate simulations (e.g.\ the simulation method based on $\epsilon$-nets mentioned above); in addition, it only holds when arbitrarily many operations can affect the qubit during its progress from Alice to Bob.

We finally remark that, using a connection between the corruption bound and Bell inequalities, the gap-orthogonality problem was recently used by Laplante et al.~\cite{laplante16} to give inefficiency-resistant Bell inequalities with large violations. 


\section{Preliminary definitions}

We use standard definitions from communication complexity~\cite{kushilevitz97}. We consider the setting of two-way communication protocols with shared randomness, with two separated parties (Alice and Bob). In these protocols, Alice and Bob each receive an input (where Alice's input is often denoted $x \in X$, and Bob's input is denoted $y \in Y$) and share a random bit-string of arbitrary length. They then follow a protocol to exchange messages, and at the end of the protocol, one party (e.g.\ Bob) is asked to output a bit-string satisfying some predetermined criteria. The cost of the protocol is the total number of bits communicated. (See~\cite{kushilevitz97} for formal definitions.) We usually consider sampling problems, where Alice and Bob are given the pair $(x,y)$ and asked to output a sample from some predetermined distribution $\mathcal{D}_{xy}$, up to some predetermined accuracy. A {\em rectangle} $R = A \times B$ is a product subset of Alice and Bob's inputs ($A \subseteq X$, $B \subseteq Y$).

$[a,b]$ denotes the interval $\{z \in \R: a \le z \le b\}$. We will use the following family of distributions on correlated pairs $(x,y)$ throughout:

\begin{dfn}[cf.~\cite{chakrabarti12}]
\label{dfn:xi}
For $p \in [-1,1]$, let $\xi^N_p$ denote the distribution on $(x,y) \in \{\pm1\}^N \times \{\pm1\}^N$ obtained by picking $x \in \{\pm1\}^N$ uniformly at random, then picking $y$ by setting $y_i=x_i$ with probability $(1+p)/2$, and $y_i = -x_i$ with probability $(1-p)/2$. When $N$ is implicit, we often write $\xi_p \coloneqq \xi^N_p$.
\end{dfn}

Note that $\xi^N_0$ is the uniform distribution on $\{\pm1\}^N \times \{\pm1\}^N$. 


\section{Proof of main result}

We now describe the proof of Theorem \ref{thm:main} in detail. We first state the technical lemmas from which the main theorem follows, and give its proof assuming these lemmas. The lemmas themselves are proved afterwards.

The first lemma is a characterisation of shared-randomness classical communication protocols in terms of rectangles. Note that similar bounds are common in the communication complexity literature (see e.g.~\cite{jain10b}), but are usually stated only for decision problems, where one wishes to either accept or reject with high probability. As we are interested in the more general setting of sampling problems, we include a proof in Section~\ref{sec:ccproofs}.

\begin{lem}[Characterisation of communication protocols]
\label{lem:proto}
Let $P$ be a communication protocol with two outcomes (accept and reject) and shared randomness, such that $P$ communicates at most $c$ bits. Then there exists a non-negative function $\rho(R)$ of rectangles $R$ such, for any distribution $\mu$ on pairs $(x,y) \in X \times Y$,
\[ \Pr [P \text{ accepts } (x,y)] = \sum_R \rho(R) \mu(R) = \eta + \sum_{R: \mu(R) \ge 2^{-2c}} \rho(R) \mu(R) \]
for some $\eta \in [0,2^{-c})$, where the probability is taken over both $(x,y) \sim \mu$ and the shared randomness of $P$.
\end{lem}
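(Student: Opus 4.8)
The plan is to unfold a shared-randomness protocol into a convex combination of deterministic protocols and then use the standard fact that a deterministic protocol communicating $c$ bits partitions $X \times Y$ into at most $2^c$ rectangles, each of which is labelled ``accept'' or ``reject''. First I would write $\Pr[P \text{ accepts } (x,y)] = \sum_r \lambda_r \Pr[P_r \text{ accepts } (x,y)]$, where $P_r$ ranges over the deterministic protocols indexed by the shared random string, $\lambda_r$ is the probability of that string, and for each $r$ the acceptance indicator $\Pr[P_r \text{ accepts } (x,y)] = \sum_{R \in \mathcal{R}_r^{\mathrm{acc}}} \mathbf{1}[(x,y) \in R]$ for a set $\mathcal{R}_r^{\mathrm{acc}}$ of at most $2^c$ disjoint rectangles. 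Taking expectations over $(x,y) \sim \mu$ turns $\mathbf{1}[(x,y)\in R]$ into $\mu(R)$, so summing over $r$ gives $\Pr[P\text{ accepts }(x,y)] = \sum_R \rho(R)\mu(R)$, where $\rho(R) := \sum_r \lambda_r \mathbf{1}[R \in \mathcal{R}_r^{\mathrm{acc}}] \ge 0$. (One subtlety: distinct deterministic protocols may induce the same rectangle $R$; I would simply aggregate their weights into a single $\rho(R)$, so the sum ranges over all possible rectangles, with $\rho$ supported on a countable set.)

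Next I would split off the contribution of the ``light'' rectangles, those with $\mu(R) < 2^{-2c}$. The key counting observation is that for each fixed $r$, the rectangles in $\mathcal{R}_r^{\mathrm{acc}}$ are disjoint and number at most $2^c$, so $\sum_{R \in \mathcal{R}_r^{\mathrm{acc}},\ \mu(R) < 2^{-2c}} \mu(R) < 2^c \cdot 2^{-2c} = 2^{-c}$. Averaging over $r$ with weights $\lambda_r$, the total light contribution $\eta := \sum_{R,\ \mu(R) < 2^{-2c}} \rho(R)\mu(R)$ satisfies $0 \le \eta < 2^{-c}$. Subtracting, $\Pr[P\text{ accepts }(x,y)] = \eta + \sum_{R,\ \mu(R) \ge 2^{-2c}} \rho(R)\mu(R)$, which is exactly the claimed form. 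Note that $\rho$ is defined independently of $\mu$ — it depends only on which rectangles appear as accepting leaves across the deterministic protocols and with what weight — so the same $\rho$ works for every distribution $\mu$, as required; only the split point and the value of $\eta$ depend on $\mu$ (and $c$).

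I expect the only genuine obstacle to be the bookkeeping around the claim that a shared-randomness protocol communicating ``at most $c$ bits'' is a convex combination of deterministic protocols each communicating at most $c$ bits, and that each such deterministic protocol's transcript set partitions the input space into at most $2^c$ rectangles. For variable-length protocols one must be slightly careful: the set of possible transcripts of a protocol that sends at most $c$ bits on every input (over every branch) has size at most $2^{c}$ if we count by prefix-free codes, or at most $2^{c+1}$ if not — so I would either assume the standard convention that the bound $c$ counts total communication including any length-encoding, or absorb the factor of $2$ by replacing $2^{-2c}$ with $2^{-2c}$ after noting $2^{c+1}\cdot 2^{-2c} \le 2^{-c+1}$ and adjusting constants; since the lemma is used downstream only up to constants this is harmless, but for a clean statement I would just invoke the standard normalisation (as in~\cite{kushilevitz97}) under which the transcript tree has at most $2^c$ leaves. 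Everything else is a routine interchange of finite sums and an expectation.
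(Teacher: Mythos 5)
Your proposal is correct and is essentially identical to the paper's proof: both decompose the shared-randomness protocol as a mixture of deterministic protocols, each contributing at most $2^c$ disjoint accepting rectangles, define $\rho(R)$ as the total mixture weight of protocols having $R$ as an accepting rectangle, and bound the light-rectangle contribution by $2^c \cdot 2^{-2c} = 2^{-c}$ averaged over the randomness. The remarks on aggregating weights and on transcript-counting conventions are just the bookkeeping the paper leaves implicit.
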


\begin{lem}[From Fourier sampling to two-outcome protocols]
\label{lem:prototoaccprob}
Assume that, for all sufficiently large $N$, there exists a protocol for Distributed Fourier Sampling with inaccuracy $\epsilon$ that communicates at most $\gamma N$ bits. Let $b$ be any positive constant. Then there exists a protocol that communicates at most $2\gamma N$ bits and accepts with probability $\widetilde{p}_{xy}$ on input $(x,y) \in \{\pm1\}^N \times \{\pm1\}^N$ such that:
\[ \E_{(x,y) \sim \xi_{-\sqrt{b/N}}}\left[ \widetilde{p}_{xy} \right] \in \left[\frac{1-\epsilon'}{4N},\frac{1+\epsilon'}{4N}\right], \;\;\;\; \E_{(x,y) \sim \xi_0}\left[ \widetilde{p}_{xy} \right] \in \left[\frac{(b + 1)\beta -\epsilon'}{4N},\frac{(b + 1)\beta+\epsilon'}{4N}\right], \]
\[ \E_{(x,y) \sim \xi_{\sqrt{b/N}}}\left[ \widetilde{p}_{xy} \right] \in \left[\frac{(4b + 1)\beta-\epsilon'}{4N},\frac{(4b + 1)\beta+\epsilon'}{4N}\right], \]
where $\epsilon' = C\eee^{18b} \epsilon$ for some universal constant $C$, and $\beta \in [1-o(1),1]$.
\end{lem}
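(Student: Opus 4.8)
The plan is to take the hypothesised Distributed Fourier Sampling protocol and turn it into a two-outcome protocol via the standard trick of (i) running the sampler to obtain a candidate string $s$, and (ii) accepting iff $s = 0^n$, which happens with probability exactly $\widetilde{p}_{fg}(0^n)$. By the accuracy hypothesis, $\widetilde{p}_{fg}(0^n)$ is within (an averaged) $\epsilon$ of $p_{fg}(0^n) = (\braket{f,g}/N)^2$. Switching to the $\{\pm1\}^N$ notation $x=f$, $y=g$, this gives a protocol accepting with probability $\widetilde{p}_{xy}$ with $\E_{(x,y)\sim\xi_0}|\widetilde{p}_{xy} - (\braket{x,y}/N)^2| \le \epsilon/N$. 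The doubling of communication to $2\gamma N$ comes from the fact that each query/evaluation of $fg$ costs $2$ bits, or more simply from composing with an auxiliary round; in any case the constant-factor blow-up is harmless.

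The substance of the lemma is engineering the three \emph{target} expectations. The key observations are: first, under $\xi_p$ we have $\E[\braket{x,y}] = pN$ and $\E[\braket{x,y}^2] = p^2N^2 + (1-p^2)N$, so $\E_{(x,y)\sim\xi_p}[(\braket{x,y}/N)^2] = p^2 + (1-p^2)/N$; plugging in $p = \pm\sqrt{b/N}$ gives $\E[(\braket{x,y}/N)^2] = (b+1)/N - b/N^2 \approx (b+1)/N$, which already explains the $(b+1)\beta$ and, after the correlated bit-flip below, the $(4b+1)\beta$ in the statement (the $\beta\in[1-o(1),1]$ absorbs the $O(1/N^2)$ corrections). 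Second, to hit the value $1/N$ in the first slot rather than $(b+1)/N$, and to reposition the other two slots, Alice and Bob use shared randomness to apply a \emph{correlated random shift} to their inputs — flipping a common coordinate pattern that maps $\braket{x,y}$ to $\braket{x,y} + (\text{controlled shift})$ while leaving the input distribution within the $\xi_p$ family. Concretely, flipping $y$ on a set $S$ of coordinates on which $x$ is fixed to a known sign, or XOR-ing in a shared random string, shifts the expected squared inner product by a controllable additive amount, and choosing three such shifts — one for each of $\xi_{-\sqrt{b/N}}$, $\xi_0$, $\xi_{\sqrt{b/N}}$ — lets us arrange the expectations $1/(4N)$, $(b+1)\beta/(4N)$, $(4b+1)\beta/(4N)$ respectively (the global factor $1/4$ coming from one further independent fair coin used to dilute acceptance). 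I would carry this out by writing the post-shift expected acceptance as $\frac{1}{4}\E[(\braket{x,y}/N + \text{shift})^2]$ under each distribution and solving the resulting elementary linear/quadratic relations in the shift parameters.

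The step I expect to be the main obstacle is \textbf{transporting the accuracy bound from $\xi_0$ to $\xi_{\pm\sqrt{b/N}}$}. The hypothesis only gives expected error $\epsilon/N$ under the uniform distribution $\xi_0$; but $\xi_{\pm\sqrt{b/N}}$ has total variation distance from $\xi_0$ that is not $o(1)$, so one cannot naively push the error across. This is exactly the ``apparently technical but important'' issue flagged in Section \ref{sec:sketch}: the fix is to have Alice and Bob first map $(x,y) \in \{\pm1\}^N$ to larger inputs $(x',y') \in \{\pm1\}^{N'}$, $N' = \Theta(N)$, that preserve $\braket{x,y}$ (e.g.\ by a deterministic replication/padding using shared randomness to randomise the embedding), chosen so that every atom of the pushed-forward correlated distribution $\xi'_{\pm}$ on $\{\pm1\}^{N'}$ has probability at most $e^{Cb}$ times its probability under the uniform distribution on $\{\pm1\}^{N'}$. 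Then $\E_{\xi'_\pm}[\,|\widetilde p - p|\,] \le e^{Cb}\,\E_{\xi'_0}[\,|\widetilde p - p|\,] \le e^{Cb}\epsilon/N' = O(\epsilon'/N)$, which is precisely the $\epsilon' = e^{Cb}\epsilon$ appearing in the statement. Verifying the pointwise density bound — i.e.\ that stretching the string length by a large enough constant factor makes the correlated binomial-type weights dominated by uniform weights up to a factor depending only on $b$ — is the one genuinely quantitative calculation; everything else is bookkeeping with the identities $\E_{\xi_p}[\braket{x,y}] = pN$ and $\E_{\xi_p}[\braket{x,y}^2] = p^2N^2 + (1-p^2)N$.
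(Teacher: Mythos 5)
Your overall architecture matches the paper's: convert the sampler into a two-outcome protocol by accepting on a fixed outcome, reposition the three expectations by a correlated (shared-randomness) transformation of the inputs, and transfer the $\xi_0$-accuracy to the correlated distributions by an inner-product-preserving padding to length $\Theta(N)$ together with a density-ratio bound $e^{O(b)}$ (this last step is exactly the paper's Fact~\ref{fact:techbound}, which you correctly flag as the quantitative crux). However, two of your steps have genuine gaps. First, accepting iff the sample equals $0^n$ does \emph{not} give $\E_{(x,y)\sim\xi_0}\bigl[\,|\widetilde p_{xy}-(\braket{x,y}/N)^2|\,\bigr]\le \epsilon/N$: the sampling guarantee only bounds the error summed over all $N$ outcomes by $\epsilon$, so the error at the particular outcome $0^n$ could be of order $\epsilon$, a factor $N$ too large for the windows of width $\epsilon'/(4N)$ in the lemma. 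The paper repairs this by pigeonhole (some outcome $s$ has expected error at most $\epsilon/N$) followed by the relabelling $x\mapsto x^{(s)}$ with $x^{(s)}_z=(-1)^{s\cdot z}x_z$, which maps $p_{xy}(s)$ to $p_{x^{(s)}y}(0^n)$ while preserving uniformity of the inputs; your argument needs this (or an equivalent) step. Relatedly, the factor $2$ in $2\gamma N$ comes from running the sampler on padded inputs of length $2N$, not from ``2 bits per query''.

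Second, your device for repositioning the three expectations does not work as described. XOR-ing a shared random string into both inputs leaves every agreement, hence $\braket{x,y}$, unchanged; and flipping $y$ on coordinates where ``$x$ is fixed to a known sign'' is unavailable, since $x$ is Alice's private, uniformly random input. More importantly, ``choosing three such shifts, one for each of $\xi_{-\sqrt{b/N}},\xi_0,\xi_{\sqrt{b/N}}$'' misreads the requirement: the lemma (and its use in the main theorem, where a single rectangle decomposition is weighed against all three distributions) needs \emph{one} protocol whose acceptance probability satisfies all three bounds simultaneously, so a single transformation must act on all three distributions at once. The paper's choice is, per coordinate with shared randomness, ``with probability $q=p/(1+p)$ overwrite $(x_i,y_i)$ by a common random bit'', which maps $\xi_r\mapsto\xi_{r+q-rq}$ for every $r$ simultaneously, i.e.\ $\xi_{-p}\mapsto\xi_0$, $\xi_0\mapsto\xi_{p/(1+p)}$, $\xi_p\mapsto\xi_{2p/(1+p)}$ with $p=\sqrt{b/N}$; Fact~\ref{claim:avgprob} then gives the values $1$, $(b+1)\beta$, $(4b+1)\beta$ over $4N$, the $1/4$ arising automatically from the padding to length $2N$ (not from an extra ``dilution coin'' --- adding such a coin on top of the padding would wreck the constants). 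Your additive-shift template $\tfrac14\E[(\braket{x,y}/N+\mathrm{shift})^2]$ could in principle be realized by appending roughly $\sqrt{bN}$ agreeing coordinates, but then the density-domination estimate would have to be redone for that construction, and neither of the mechanisms you actually propose implements it.
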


\begin{lem}[Skewed anticoncentration bound for large rectangles]
\label{lem:tech}
For all $b > 0$ there exists $\delta > 0$ such that for large enough $N$, all rectangles $R \subseteq \{\pm1\}^N \times \{\pm1\}^N$ such that $\xi_0(R) \ge 2^{-\delta N}$, and all~$s$,
\[ \frac{1}{2} \left( \eee^{s} \xi_{-\sqrt{b/N}}(R) + \eee^{-s} \xi_{\sqrt{b/N}}(R) \right) \ge \frac{2}{3} \xi_0(R). \]
\end{lem}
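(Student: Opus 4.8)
### Plan for the proof of Lemma~\ref{lem:tech}

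\textbf{Overall strategy.} The plan is to follow the Fourier-analytic approach of Chakrabarti--Regev~\cite{chakrabarti12}, but to track the extra skew parameter $s$ throughout. The key object is the two-variable generating function of a rectangle. For a rectangle $R = A \times B$, write $x \sim \{\pm1\}^N$ uniformly and, given $x$, let $y$ be correlated with $x$ at correlation $p$. Conditioning on $x$ and on the ``noise pattern'' $z \in \{\pm1\}^N$ (where $y_i = z_i x_i$), one sees that $\xi_p(R)$ is a polynomial expression in $p$ whose coefficients are the ``level weights'' of the rectangle: if we set $\alpha_k = \Pr_{x \in A}[\text{the noise restricted to the positions dictating } B \text{ has weight related to } k]$, the standard computation gives $\xi_p(R) = \E_{x \in A, z}\big[\mathbf{1}[zx \in B]\big]\cdot 2^{-N}|A|$ and, crucially, after writing $y_i = x_i$ w.p.\ $(1+p)/2$, the quantity $\xi_p(R)/\xi_0(R)$ becomes $\E[(1+p)^{N - d}(1-p)^{d}]/2^N$ over the relevant Hamming-distance distribution $d$ of pairs in $R$. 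Substituting $p = \pm\sqrt{b/N}$ and expanding, the left-hand side of the claimed inequality becomes, up to $\xi_0(R)$, an expectation over $d$ of
\[
\tfrac{1}{2}\Big(e^{s}(1-\sqrt{b/N})^{d}(1+\sqrt{b/N})^{N-d} + e^{-s}(1+\sqrt{b/N})^{d}(1-\sqrt{b/N})^{N-d}\Big)\cdot 2^{-N}.
\]
Setting $d = N/2 + t\sqrt{N}/2$ (so $t$ measures the normalized deviation of the inner product), each bracketed term behaves like $\exp(\mp st \sqrt{b} \mp s + o(1))$ up to lower-order corrections, so the bracket is $\approx \cosh(s + t\sqrt b)$ (after symmetrization), and we need $\E_{x,y \in R}[\cosh(s + t\sqrt b)] \ge \tfrac{4}{3}$. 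Since $\cosh \ge 1$ pointwise, it suffices to show that a large rectangle cannot have its inner-product deviation $t$ too concentrated near the single point where $\cosh(s+t\sqrt b) $ is minimized (namely $t = -s/\sqrt b$); i.e., a large rectangle must have enough ``spread'' in $\braket{x,y}$.

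\textbf{Key steps, in order.} (1) Reduce to a clean analytic inequality: express $\xi_p(R)/\xi_0(R)$ exactly via the Hamming-distance profile of $R$ and perform the Gaussian-type approximation $d \mapsto N/2 + t\sqrt N/2$, controlling the error terms uniformly (this is where the $2^{-\delta N}$ lower bound on $\xi_0(R)$ is used, to guarantee $\log(1/\xi_0(R)) = O(N)$ and hence that the relevant $t$-range and tail contributions are controlled). (2) State the target as: for every ``large'' rectangle and every $s$, $\E_R[\cosh(s + t\sqrt b)] \ge \tfrac{4}{3} - o(1)$, where the expectation is over $(x,y)$ uniform in $R$ and $t = t(x,y) = (\langle x,y\rangle/\sqrt N)$ suitably normalized. (3) Prove this anticoncentration statement. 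Here I would invoke the core lemma of~\cite{chakrabarti12} (or re-derive it): for a rectangle $A \times B$ with $|A|, |B| \ge 2^{-\delta N} 2^N$, the distribution of $\langle x, y \rangle$ over uniform $(x,y) \in A\times B$ is ``spread out'' — more precisely, it cannot be concentrated within an $O(\sqrt N)$ window, because $A$ and $B$ each being large forces, via a Fourier/hypercontractivity argument, that the inner product has variance $\Omega(N)$ and appropriate anticoncentration. (4) Convert: since $\cosh$ is convex with $\cosh(0)=1$ and grows quadratically near its minimum and exponentially in the tails, a variance-$\Omega(N)$ statement on $\langle x,y\rangle$ (equivalently a constant-order variance on $t$) combined with pointwise $\cosh \ge 1$ yields $\E[\cosh(s+t\sqrt b)] \ge 1 + \Omega(b \cdot \Var(t)) \ge \tfrac43$ for $\delta$ small enough, uniformly in $s$ (the shift $s$ only moves where the parabola is centered, not its curvature). (5) Pick $\delta$ as a function of $b$ at the end, small enough that all the $o(1)$ and error terms from steps (1) and (3) are dominated.

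\textbf{Main obstacle.} The hard part is step (3)–(4): getting a \emph{skew-uniform} anticoncentration bound, i.e., one that holds for \emph{all} $s \in \R$ simultaneously with the \emph{same} $\delta$. In~\cite{chakrabarti12} the analysis is tailored to $s=0$ and ultimately rests on controlling $\cosh(z)$; here the relevant function is $\cosh(z - s)$, and while this is ``morally the same function shifted,'' one must make sure the quantitative Fourier estimates (which in~\cite{chakrabarti12} bound a sum of Fourier coefficients of $A$ and $B$ weighted by Krawtchouk-type kernels) do not degrade as $s \to \pm\infty$. The saving grace, as the paper hints in its footnote, is that the curvature of $\cosh(z-s)$ at its minimum is independent of $s$, so the $\Omega(N)$-variance mechanism is $s$-independent; the technical work is to make the error analysis in the $d \to t$ approximation uniform in $s$ over the range of $t$ that carries non-negligible mass (which, because $\xi_0(R)$ is only exponentially small, is $|t| = O(\sqrt N)$ at most, and the exponential growth of $\cosh$ makes even the far tail help rather than hurt). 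A secondary nuisance is handling the regime where $R$ is large but highly ``biased'' (e.g.\ $A$ concentrated on a Hamming ball): one must check the variance lower bound still applies, which is exactly the content of the large-rectangle lemma of~\cite{chakrabarti12} and requires no new idea, only care.
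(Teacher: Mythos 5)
There is a genuine gap, and it sits in the step you treat as routine. In your reduction, the per-pair weight ratio is not what you wrote: for $(x,y)\in R$ with $\Delta=\langle x,y\rangle$ and $p=\sqrt{b/N}$, one has
\[
\frac12\Bigl(e^{s}(1-p)^{(N-\Delta)/2}(1+p)^{(N+\Delta)/2}\Bigr)\Big/ \text{(uniform weight)} \;+\;\cdots\;=\;(1-p^2)^{N/2}\cosh\bigl(\Delta\operatorname{artanh}(p)-s\bigr),
\]
so the factor $(1-p^2)^{N/2}=e^{-b/2+o(1)}$ is a genuine constant, not an $o(1)$ correction (your exponent ``$\exp(\mp st\sqrt b\mp s+o(1))$'' is also mis-stated: there is no $s\cdot t$ cross term, and the $-b/2$ has been silently absorbed). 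The correct target is therefore $\E_{(x,y)\sim R}\bigl[\cosh(\sqrt b\,t-s)\bigr]\ge \tfrac23 e^{b/2}(1+o(1))$ with $t=\langle x,y\rangle/\sqrt N$, not $\ge\tfrac43$. Your step (4) cannot deliver this: pointwise $\cosh(u)\ge 1+u^2/2$ together with $\Var(t)=\Omega(1)$ only gives $\E[\cosh]\ge 1+\Omega(b)$, which is far below $\tfrac23 e^{b/2}$ for the constant $b$ actually used downstream (the main theorem takes $b=10$, so the target is about $\tfrac23 e^{5}\approx 99$ while a variance bound gives order $10$). What is required is an exponential-moment lower bound $\E[e^{\pm\sqrt b\,t}]\gtrsim e^{b/2}$, i.e.\ that the inner product restricted to a large rectangle is Gaussian-like at the level of its Laplace transform, not merely its variance.

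This is exactly where the paper leans on the full Chakrabarti--Regev machinery rather than a variance/anticoncentration statement: it first proves the Gaussian analogue (Theorem \ref{thm:gaussian}), restricting to subsets $A'\subseteq A$, $B'\subseteq B$ and decomposing $\langle x,y\rangle$ as $\alpha rX+rY$ with $X$ conditionally close in relative entropy to a standard Gaussian, then applies the cosh lemma of \cite{chakrabarti12} (Fact \ref{claim:cosh}), which yields $\E[\cosh(\alpha x+z)]\ge e^{\alpha^2/2}-\epsilon$ and thus recovers the needed $e^{b/2}$ factor; the cube statement (Corollary \ref{cor:cube}, which contains Lemma \ref{lem:tech}) then follows by the sign map rather than by working on the cube directly. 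Note also that Fact \ref{claim:cosh} already allows an arbitrary shift $z$, so the uniformity in $s$ that you single out as the main obstacle is in fact the easy part of the paper's argument; the hard content is precisely the relative-entropy-to-Gaussian structure that upgrades ``large rectangles have spread-out inner product'' to the exponential-moment bound your sketch is missing.
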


\begin{repthm}{thm:main}
There exist universal constants $\epsilon, \gamma > 0$ such that, for sufficiently large $N$, any two-way classical communication protocol for Distributed Fourier Sampling with shared randomness and inaccuracy $\epsilon$ must communicate at least $\gamma N$ bits.
\end{repthm}

\begin{proof}
Assume towards a contradiction that for all sufficiently large $N$ there exists a protocol for Distributed Fourier Sampling with inaccuracy $\epsilon$ that communicates at most $\gamma N$ bits, for some small constants $\epsilon, \gamma > 0$ which will be chosen later. Let $P$ be the corresponding two-outcome communication protocol that communicates $c \le 2 \gamma N$ bits and has acceptance probability bounds given by Lemma \ref{lem:prototoaccprob}, for some choice of $b = O(1)$ to be determined. By Lemma \ref{lem:proto}, $P$ in turn implies the existence of a non-negative function $\rho(R)$ of rectangles $R$ such that
\be \label{eq:p0} p_0 \coloneqq \Pr_{(x,y) \sim \xi_0} [P \text{ accepts } (x,y)] = \eta + \sum_{R, \xi_0(R) \ge 2^{-2c}} \rho(R) \xi_0(R) \ee
for some $\eta \in [0,2^{-c})$. By Lemma \ref{lem:prototoaccprob},
\[ p_0 \ge \frac{(b + 1)\beta -\epsilon'}{4N}, \]
where $\epsilon' = C \eee^{6b} \epsilon$ and $\beta = 1-o(1)$. By (\ref{eq:p0}) and Lemma \ref{lem:tech}, assuming that $\gamma \le \delta/4$, for any $s \in \R$
\beas
p_0 &\le& 2^{-c} + \sum_{R,\xi_0(R) \ge 2^{-2c}} \rho(R) \frac{3}{4} \left(  \eee^{s} \xi_{-\sqrt{b/N}}(R) + \eee^{-s} \xi_{\sqrt{b/N}}(R) \right) \\
&\le& 2^{-c} + \frac{3}{4} \left( \eee^s \sum_R \rho(R)\xi_{-\sqrt{b/N}}(R) + \eee^{-s} \sum_{R'} \rho(R')\xi_{\sqrt{b/N}}(R')   \right)\\
&\le& 2^{-c} + \frac{3}{4} \left( \eee^s \Pr_{(x,y) \sim \xi_{-\sqrt{b/N}}} [P \text{ accepts } (x,y)] + \eee^{-s} \Pr_{(x,y) \sim \xi_{\sqrt{b/N}}} [P \text{ accepts } (x,y)]  \right) \\
&\le& 2^{-c} + \frac{3}{4} \left(\eee^s \frac{1+\epsilon'}{4N} + \eee^{-s} \frac{(4b+1)\beta+\epsilon'}{4N} \right),
\eeas
where the final inequality is Lemma \ref{lem:prototoaccprob}. Choosing, for example, $b = 10$, $s = 2$, we get that for some universal constant $D$,
\[ \frac{11}{4N} \le \frac{3}{16N}\left(\eee^2 + \frac{41}{\eee^2} \right) + D \frac{\epsilon'}{N}  + o(1/N) \le \frac{10}{4N} + D \frac{\epsilon'}{N}  + o(1/N), \]
where we have folded the $2^{-c}$ term into the $o(1/N)$ term, which is justified by assuming that $P$ communicates exactly $2\gamma N$ bits, up to rounding to an integer (if there exists a protocol that solves a problem by communicating at most $k$ bits for some $k$, there exists a protocol that solves the same problem by communicating exactly $k$ bits). This is a contradiction for small enough $\epsilon = \Omega(1)$ and large enough $N$. Therefore such a protocol cannot exist.
\end{proof}


\section{Proofs of communication complexity lemmas}
\label{sec:ccproofs}

We first prove a mathematical characterisation of communication protocols in terms of rectangles.

\begin{replem}{lem:proto}
Let $P$ be a communication protocol with two outcomes (accept and reject) and shared randomness, such that $P$ communicates at most $c$ bits. Then there exists a non-negative function $\rho(R)$ of rectangles $R$ such, for any distribution $\mu$ on pairs $(x,y) \in X \times Y$,
\[ \Pr [P \text{ accepts } (x,y)] = \sum_R \rho(R) \mu(R) = \eta + \sum_{R, \mu(R) \ge 2^{-2c}} \rho(R) \mu(R) \]
for some $\eta \in [0,2^{-c})$, where the probability is taken over both $(x,y) \sim \mu$ and the internal randomness of $P$.
\end{replem}

\begin{proof}
For any $(x,y)$, we have
\[ \Pr[P \text{ accepts } (x,y)] = \sum_D \pi(D) [D \text{ accepts } (x,y)] = \sum_D \pi(D) \sum_{R \in \mathcal{R}_D} [(x,y) \in R], \]
where $\pi(D)$ is a distribution on deterministic protocols $D$, each of which corresponds to a disjoint set of 1-rectangles $\mathcal{R}_D$. Taking the expectation over $\mu$,
\beas
\Pr_{(x,y) \sim \mu,D \sim \pi} [P \text{ accepts } (x,y)] &=& \sum_D \pi(D) \sum_{R \in \mathcal{R}_D} \mu(R)\\
&=& \sum_D \pi(D) \sum_{R \in \mathcal{R}_D,\mu(R) < 2^{-2c} } \mu(R) + \sum_D \pi(D) \sum_{R \in \mathcal{R}_D,\mu(R) \ge 2^{-2c} } \mu(R)\\
&=& \eta + \sum_D \pi(D) \sum_{R \in \mathcal{R}_D,\mu(R) \ge 2^{-2c} } \mu(R),
\eeas
where we infer that $0 \le \eta < 2^{-c}$ by using $|\mathcal{R}_D| \le 2^c$ for all $D$ and $\sum_D \pi(D) = 1$. The claim follows by taking $\rho(R) = \sum_{D,R \in \mathcal{R}_D} \pi(D)$.
\end{proof}

Next we prove a lemma that maps between a distributed Fourier sampler and a two-outcome communication protocol whose acceptance probability obeys certain bounds. In order to prove the lemma, we need the following technical claims, whose proofs are deferred to the very end (Section \ref{sec:finalproofs}):

\begin{fact}
\label{claim:avgprob}
\[ \E_{(x,y) \sim \xi^N_p}\left[ \left( \frac{\braket{x,y}}{N} \right)^2 \right] =  \frac{1}{N} + \left(1- \frac{1}{N}\right) p^2. \]
\end{fact}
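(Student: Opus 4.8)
The plan is to reduce the claim to the elementary second-moment identity for a sum of i.i.d.\ signs. First I would change variables: for each coordinate $i$ set $z_i := x_i y_i$. By Definition~\ref{dfn:xi}, conditioned on $x$ the bit $y_i$ equals $x_i$ with probability $(1+p)/2$ and $-x_i$ with probability $(1-p)/2$, so $z_i \in \{\pm 1\}$ takes the value $+1$ with probability $(1+p)/2$ and $-1$ with probability $(1-p)/2$, \emph{independently of $x$ and of the other $z_j$}. Consequently the $z_i$ are i.i.d.\ with $\E[z_i] = p$ and $\E[z_i^2] = 1$, and $\braket{x,y} = \sum_{i=1}^N z_i$.

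Given this, the computation is immediate. I would expand the square and use pairwise independence:
\[ \E_{(x,y) \sim \xi_p}\!\left[\braket{x,y}^2\right] = \sum_{i,j=1}^N \E[z_i z_j] = \sum_{i=1}^N \E[z_i^2] + \sum_{i \ne j} \E[z_i]\,\E[z_j] = N + N(N-1)p^2. \]
Dividing by $N^2$ yields $\frac{1}{N} + \frac{N-1}{N} p^2 = \frac{1}{N} + \left(1 - \frac{1}{N}\right) p^2$, which is exactly the claimed identity. (Equivalently, one could observe $\E[\braket{x,y}] = Np$ and, by independence, $\Var(\braket{x,y}) = N(1-p^2)$, then combine the two; this produces the same answer.)

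There is no real obstacle in this argument — the only step that needs a word of justification is the independence of the $z_i$, which is built directly into the definition of $\xi_p$. The genuinely delicate analogues of this computation (controlling how $(\braket{x,y}/N)^2$ behaves under the correlated bit-flips and the constant-factor blow-up of the inputs, and transferring accuracy bounds between $\xi_0$ and the $\xi_{\pm\sqrt{b/N}}$) are what the surrounding lemmas handle; this fact is merely the base-case moment calculation feeding into Lemma~\ref{lem:prototoaccprob}.
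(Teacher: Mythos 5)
Your proof is correct and follows essentially the same route as the paper's: expand $\braket{x,y}^2 = \left(\sum_i x_i y_i\right)^2$, use independence of the coordinate products to factor the off-diagonal terms, and compute $\E[x_i y_i] = p$. Introducing the variables $z_i = x_i y_i$ explicitly is only a cosmetic repackaging of the paper's calculation.
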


\begin{fact}
\label{fact:techbound}
Let $\xi'_p$ be the distribution on pairs $x',y' \in \{\pm1\}^{2N}$ formed by choosing $(x,y) \sim \xi_p$, then setting $x' = \sigma(x,x'')$, $y' = \sigma(y,y'')$, where $x'',y'' \in \{\pm1\}^N$ are fixed strings such that $\braket{x'',y''} = 0$, and $\sigma \in S_{2N}$ is a random permutation of the indices of the strings. Then there exists a universal constant $C$ such that, for sufficiently large $N$ and any $\Delta$, and $p \in [-0.01,0.01]$,
\[ \Pr_{(x',y')\sim \xi'_p}[\braket{x',y'} = \Delta] \le C \eee^{2p^2 N} \Pr_{(x',y') \sim \xi^{2N}_0}[\braket{x',y'} = \Delta]. \]
\end{fact}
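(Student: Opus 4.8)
The plan is to reduce the statement to a comparison between a biased $\pm1$ random walk of $N$ steps and an unbiased one of $2N$ steps, and then prove that comparison by treating a Gaussian ``bulk'' and a large-deviation ``tail'' separately. First I would note that the permutation $\sigma$ and the padding play no role: since permutations preserve inner products and $\braket{x'',y''}=0$, every outcome has $\braket{x',y'}=\braket{x,y}+\braket{x'',y''}=\braket{x,y}$, so $\Pr_{(x',y')\sim\xi'_p}[\braket{x',y'}=\Delta]=\Pr_{(x,y)\sim\xi_p}[\braket{x,y}=\Delta]$, while $\Pr_{(x',y')\sim\xi^{2N}_0}[\braket{x',y'}=\Delta]=\Pr[T_{2N}=\Delta]$, where $T_M$ denotes the endpoint of an unbiased $\pm1$ walk of $M$ steps. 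Under $\xi_p$, $\braket{x,y}$ is a sum of $N$ i.i.d.\ $\pm1$ variables of mean $p$; moreover $N$ must be even (else $\braket{x'',y''}=0$ is impossible), so both sides vanish unless $\Delta$ is even with $|\Delta|\le N$, which I assume henceforth. Writing $j=(N+\Delta)/2$, the claim becomes $\binom{N}{j}(\tfrac{1+p}{2})^{j}(\tfrac{1-p}{2})^{N-j}\le C\,e^{2p^2N}\binom{2N}{N+\Delta/2}2^{-2N}$.

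Next I would perform a change of measure to extract the factor $e^{2p^2N}$. Writing $(1+p)^{j}(1-p)^{N-j}=\exp\bigl(\tfrac N2\ln(1-p^2)+\tfrac\Delta2\ln\tfrac{1+p}{1-p}\bigr)$ and using $\ln(1-p^2)\le0$ together with $\bigl|\ln\tfrac{1+p}{1-p}\bigr|\le 2.01\,|p|$ for $|p|\le0.01$, one gets $\Pr_{\xi_p}[\braket{x,y}=\Delta]\le\Pr[T_N=\Delta]\,e^{c_1|p|\,|\Delta|}$ with $c_1<1.01$. By AM--GM, $c_1|p|\,|\Delta|\le 2p^2N+c_2\,\Delta^2/N$ with $c_2:=c_1^2/8<1/6$ (here the factor $2$ in the exponent $2p^2N$ of the statement is exactly the slack needed to keep $c_2$ small). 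So it remains to prove the purely combinatorial inequality
\[ e^{c_2\,\Delta^2/N}\,\Pr[T_N=\Delta]\ \le\ C\,\Pr[T_{2N}=\Delta]\qquad\text{for all even }\Delta,\ |\Delta|\le N. \]

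For this I would split on $|\Delta|$. If $|\Delta|\le N^{3/5}$, the de Moivre--Laplace local limit theorem gives, uniformly over this range, $\Pr[T_N=\Delta]=(1+o(1))\sqrt{2/(\pi N)}\,e^{-\Delta^2/(2N)}$ and $\Pr[T_{2N}=\Delta]=(1+o(1))\sqrt{1/(\pi N)}\,e^{-\Delta^2/(4N)}$, so the ratio of the two sides is $(\sqrt2+o(1))\,e^{(c_2-1/4)\Delta^2/N}\le 2$ for large $N$, since $c_2<1/4$. If $|\Delta|>N^{3/5}$, set $u=\Delta/N$ and let $H$ be the binary entropy in nats; the standard bounds $\binom{N}{(N+\Delta)/2}2^{-N}\le e^{-N(\ln2-H(\tfrac12+\tfrac u2))}$ and $\binom{2N}{N+\Delta/2}2^{-2N}\ge\tfrac1{2N+1}e^{-2N(\ln2-H(\tfrac12+\tfrac u4))}$, together with the elementary estimates $\ln2-H(\tfrac12+x)\ge 2x^2$ for $|x|\le\tfrac12$ and $\ln2-H(\tfrac12+x)\le\tfrac83x^2$ for $|x|\le\tfrac14$, give
\[ \frac{e^{c_2\,\Delta^2/N}\,\Pr[T_N=\Delta]}{\Pr[T_{2N}=\Delta]}\ \le\ (2N+1)\,e^{\,Nu^2(c_2-1/6)}, \]
which tends to $0$ because $c_2<1/6$ and $Nu^2=\Delta^2/N>N^{1/5}$ here; in particular it is $\le1$ for large $N$. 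Taking $C=2$ finishes the proof for all sufficiently large $N$.

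The hard part is this walk comparison, and within it the crossover $|\Delta|\approx N^{3/5}$: in the bulk no polynomial loss is affordable, so the two-sided local limit theorem with its $\sqrt{1/N}$ prefactors is needed and the inequality holds only because $1/4>c_2$; in the tail a polynomial loss is harmless but only crude large-deviation bounds are available and the inequality survives only because $1/6>c_2$. Choosing a single threshold exponent strictly between $1/2$ and $2/3$, and upstream keeping $c_2$ comfortably below $1/6$, is the delicate bookkeeping to get right.
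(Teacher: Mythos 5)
Your proof is correct, and it reaches the stated bound by a somewhat different route than the paper. Both arguments start identically: reduce to the explicit binomial ratio $\binom{N}{(N+\Delta)/2}\bigl(\tfrac{1+p}{2}\bigr)^{(N+\Delta)/2}\bigl(\tfrac{1-p}{2}\bigr)^{(N-\Delta)/2}$ versus $2^{-2N}\binom{2N}{N+\Delta/2}$. Where you diverge is in how the $p$-dependence and the bulk/tail comparison are handled. You peel off the factor $e^{2p^2N}$ up front via the tilt bound $e^{c_1|p||\Delta|}$ and AM--GM ($c_1|p||\Delta|\le 2p^2N+c_2\Delta^2/N$, $c_2=c_1^2/8$), and then prove a purely unbiased-walk comparison, using the uniform de Moivre--Laplace local limit theorem for $|\Delta|\le N^{3/5}$ (where $c_2<1/4$ saves you) and entropy/Pinsker-type bounds with constants $2$ and $8/3$ for larger $|\Delta|$ (where $c_2<1/6$ saves you); your constants do check out, and the threshold exponent in $(1/2,2/3)$ is chosen correctly. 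The paper instead keeps the $p$-dependence inside a single function $f(x)=\ln 2+p(2x-1)+h(x)-2h(x/2+1/4)$, uses two-sided Stirling-type binomial bounds valid for all $k$ (so no local limit theorem is needed), handles $k$ within $0.1N$ of the endpoints crudely, and on the middle range proves $F''(x)\le -2$ for $F(x)=\ln2+h(x+1/2)-2h(x/2+1/2)$, giving $F(x)\le -x^2$ and then $f\le 2px-x^2\le p^2$ by completing the square at the end. The paper's single concavity estimate is tighter: it yields $Ce^{p^2N}$, whereas your AM--GM step genuinely consumes the factor $2$ in the exponent (with only $p^2N$ allowed, your $c_2$ would exceed $1/6$ and the tail case would fail), so your argument proves exactly the $e^{2p^2N}$ statement as quoted but not the sharper form the paper actually uses later. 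One trivial imprecision: for even $\Delta$ with $N<|\Delta|\le 2N$ the right-hand side does not vanish (only the left-hand side does), so the inequality there is trivial rather than vacuous; this does not affect anything.
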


\begin{replem}{lem:prototoaccprob}
Assume that, for all sufficiently large $N$, there exists a protocol for Distributed Fourier Sampling with inaccuracy $\epsilon$ that communicates at most $\gamma N$ bits. Let $b$ be any positive constant. Then there exists a protocol that communicates at most $2\gamma N$ bits and accepts with probability $\widetilde{p}_{xy}$ on input $(x,y) \in \{\pm1\}^N \times \{\pm1\}^N$ such that:
\[ \E_{(x,y) \sim \xi_{-\sqrt{b/N}}}\left[ \widetilde{p}_{xy} \right] \in \left[\frac{1-\epsilon'}{4N},\frac{1+\epsilon'}{4N}\right], \;\;\;\; \E_{(x,y) \sim \xi_0}\left[ \widetilde{p}_{xy} \right] \in \left[\frac{(b + 1)\beta -\epsilon'}{4N},\frac{(b + 1)\beta+\epsilon'}{4N}\right], \]
\[ \E_{(x,y) \sim \xi_{\sqrt{b/N}}}\left[ \widetilde{p}_{xy} \right] \in \left[\frac{(4b + 1)\beta-\epsilon'}{4N},\frac{(4b + 1)\beta+\epsilon'}{4N}\right], \]
where $\epsilon' = C\eee^{18b} \epsilon$ for some universal constant $C$, and $\beta \in [1-o(1),1]$.
\end{replem}

\begin{proof}
We first show that, for any $N = 2^n$, the existence of a Fourier sampler with the assumed parameters implies the existence of a protocol that communicates $\gamma N$ bits and accepts with probability $\widetilde{p}_{xy}$ for each pair $(x,y)$, such that
\be \label{eq:accuracy} \E_{(x,y) \sim \xi_0}\left[ \left| \widetilde{p}_{xy} - \left(\frac{\braket{x,y}}{N} \right)^2 \right| \right] \le \frac{\epsilon}{N}. \ee
To see this, note that by the definition of Distributed Fourier Sampling and the assumed parameters for the protocol, there exists a protocol that communicates $\gamma N$ bits and samples from some distribution $p'_{xy}$ on bit-strings $s \in \{0,1\}^n$ such that
\[ \epsilon \ge \E_{(x,y) \sim \xi_0}\left[ \sum_{s \in \{0,1\}^n} | p'_{xy}(s) - p_{xy}(s)| \right] = \sum_{s \in \{0,1\}^n} \E_{x,y}[ | p'_{xy}(s) - p_{xy}(s)| ],  \]
where $p_{xy}(s) = \left( \frac{1}{N} \sum_{z \in \{0,1\}^n} (-1)^{s \cdot z} x_z y_z \right)^2$; observe that $p_{xy}(0^n) = (\braket{x,y}/N)^2$. By the above inequality, there is an $s$ such that $\E_{(x,y) \sim \xi_0}[ | p'_{xy}(s) - p_{xy}(s)| ] \le \epsilon / N$. This implies the existence of a protocol which accepts with probability $\widetilde{p}_{xy}$ on input $(x,y)$ such that $\E_{(x,y) \sim \xi_0}[ | \widetilde{p}_{xy} - p_{xy}(s)| ] \le \epsilon / N$; this protocol simply accepts when the outcome is $s$, and rejects otherwise. We can assume that $s = 0^n$ by replacing $x$ with the string $x^{(s)}$ defined by $x^{(s)}_z = (-1)^{s \cdot z} x_z$, as $p_{xy}(s) = p_{x^{(s)}y}(0^n)$. As $(x,y)$ were uniformly distributed, so are $(x^{(s)},y)$.

Alice and Bob will apply their protocol which achieves the bound (\ref{eq:accuracy}) to a pair of inputs $x',y' \in \{\pm1\}^{2N}$ given by a shifted version of their original inputs $(x,y)$. For any $\alpha \in [-1,1]$ and any $\alpha' \in [0,1]$, given a pair of inputs $(x,y) \sim \xi_\alpha$, Alice and Bob can produce a pair of inputs $(x'',y'')$ distributed according to $\xi_{\alpha + \alpha' - \alpha\alpha'}$ by using their shared randomness to do the following: for each $i$, with probability $\alpha'$, replace the pair $(x_i,y_i) \in \{\pm1\}^2$ with a pair $(z_i,z_i)$, where $z_i$ is picked uniformly at random from $\{\pm1\}$. 
Then each of $x''_i$ and $y''_i$ is individually uniform, and
\[ \Pr[x''_i = y''_i] = \alpha' + (1-\alpha') \frac{1+\alpha}{2} = \frac{1}{2} + \frac{\alpha+\alpha' -\alpha\alpha'}{2}. \]
Writing $q = \sqrt{b/N}$, $q' = q / (1+q)$ and then choosing $\alpha \in \{-q,0,q\}$, $\alpha' = q'$, this maps $\xi_{-q} \mapsto \xi_0$, $\xi_0 \mapsto  \xi_{q/(1+q)}$, $\xi_q \mapsto \xi_{2q/(1+q)}$.

Alice and Bob then form a pair of inputs $(x',y')$ by concatenating the inputs $(x'',y'')$ with a pair of inputs in $\{\pm1\}^N$ that differ at exactly $N/2$ positions, and applying the same random permutation of indices to each of their resulting strings. Denote the resulting distribution on pairs~$(x',y')$, when applied to the distribution~$\xi_r$ on~$(x'',y'')$, by~$\xi'_r$. Then clearly
\[ \Pr_{(x',y') \sim \xi'_r}[\braket{x',y'} = \Delta] = \Pr_{(x,y) \sim \xi_r}[\braket{x,y} = \Delta]. \]
By Proposition \ref{fact:techbound}, there is a universal constant $C'$ such that for any $r \in [-0.01,0.01]$,
\[ \Pr_{(x',y')\sim \xi'_r}[\braket{x',y'} = \Delta] \le C' \eee^{2r^2 N} \Pr_{(x',y') \sim \xi^{2N}_0}[\braket{x',y'} = \Delta]. \]
For $r \in \{0,q/(1+q),2q/(1+q)\}$ and sufficiently large $N$, $0 \le r \le 3 \sqrt{b / N}$, where we recall that $q = \sqrt{b/N}$. Write $\operatorname{Err}(x',y') \coloneqq | \widetilde{p}_{x'y'} - (\braket{x',y'}/(2N) )^2 |$ for the error when Alice and Bob apply their protocol to $(x',y')$. Then we have
\beas
\E_{(x',y') \sim \xi'_r}\left[ \operatorname{Err}(x',y') \right] &=& \sum_{\Delta=-N}^N \Pr_{(x',y')\sim \xi'_r}[ \braket{x',y'} = \Delta ]\,\E_{\braket{x',y'} = \Delta}\left[ \operatorname{Err}(x',y')  \right] \\
&\le& \sum_{\Delta=-N}^N C' \eee^{18 b} \Pr_{(x',y')\sim \xi^{2N}_0}[ \braket{x',y'} = \Delta ]\,\E_{\braket{x',y'} = \Delta}\left[ \operatorname{Err}(x',y') \right]\\
&\le&C'\eee^{18 b}\,\E_{(x',y') \sim \xi^{2N}_0}\left[ \operatorname{Err}(x',y') \right]\\
&\le& C'\eee^{18 b} \frac{\epsilon}{N},
 \eeas
where the sums are only taken over even $\Delta$, the inner expectations are uniform over strings $x'$, $y'$ such that $\braket{x',y'} = \Delta$, and the last inequality follows from (\ref{eq:accuracy}). By Proposition~\ref{claim:avgprob}, we have
\[  \E_{(x',y') \sim \xi'_0}\left[ \left( \frac{\braket{x',y'}}{2N} \right)^2 \right] = \frac{1}{4N},\;\;\;\;  \E_{(x',y') \sim \xi'_{q/(1+q)}}\left[ \left( \frac{\braket{x',y'}}{2N} \right)^2\right] = \frac{1}{4}\left(\frac{1}{N} + \left(1- \frac{1}{N}\right) \left(\frac{q}{1+q}\right)^2 \right),\]
\[ \E_{(x',y') \sim \xi'_{2q/(1+q)}}\left[ \left( \frac{\braket{x',y'}}{N} \right)^2\right] = \frac{1}{4} \left( \frac{1}{N} + \left(1- \frac{1}{N}\right) \left(\frac{2q}{1+q}\right)^2 \right).\]
The lemma follows by inserting the definition of $q$ and setting $C = 4C'$.
\end{proof}


\section{Proof of Lemma \ref{lem:tech}: skewed anticoncentration bound}

The proof of Lemma \ref{lem:tech} is analogous to that of Lemma 2.5 in~\cite{chakrabarti12}. Indeed, although it is a generalisation of this result, somewhat remarkably the same proof technique goes through. The starting point is to prove an equivalent result for Gaussian measure.

Let $\gamma^N$ denote the Gaussian distribution on $\R^N$ with density $(2\pi)^{-N/2} \eee^{\|x\|^2/2}$, and let $\Xi_p$ denote the distribution on pairs $(x,y)$ formed by choosing $x,z \sim \gamma^N$, and setting $y = p x + \sqrt{1-p^2}z$. Note that $\Xi_p$ is the ``Gaussian analogue'' of the distribution $\xi_p$ previously introduced.

Let $D(\cdot \| \cdot)$ denote the relative entropy
\[ D(P \| Q) \coloneqq \int P(x) \ln(P(x)/Q(x))\,\mathrm{d}x, \]
where $P$ and $Q$ are probability density functions over $\R$, and write $D_\gamma(X) \coloneqq D(P\|\gamma)$, where $X$ is a random variable with distribution $P$. We will need the following technical claims:

\begin{fact}[Chakrabarti and Regev~{\cite[Claim 3.7]{chakrabarti12}}]
\label{claim:cosh}
For all $\epsilon,\alpha_0 > 0$ there exists a $\delta > 0$ such that for any probability distribution $P$ on $\R$ satisfying $D_\gamma(P) < \delta$, any $z \in \R$, and any $0 < \alpha \le \alpha_0$, we have
\[ \E_{x \sim P}[\cosh(\alpha x + z)] \ge \eee^{\alpha^2/2} - \epsilon. \]
\end{fact}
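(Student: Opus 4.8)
The plan is to prove this one-dimensional statement directly, feeding in the hypothesis $D_\gamma(P) < \delta$ only through the Donsker--Varadhan (Gibbs) variational inequality $\E_P[f] \le D(P\|\gamma) + \log \E_\gamma[e^f]$, valid for any $P \ll \gamma$ and any measurable $f$ with $\E_\gamma[e^f] < \infty$, and splitting into two regimes according to the size of $|z|$. First I would reduce to $z \ge 0$: replacing $P$ by its pushforward under $x \mapsto -x$ leaves $D_\gamma(\cdot)$ unchanged, since $\gamma$ is symmetric, and turns $\cosh(\alpha x + z)$ into $\cosh(\alpha x - z)$ by evenness of $\cosh$, so a bound for nonnegative shifts yields the general one. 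I would also record the exact Gaussian identities $\E_\gamma[\cosh(\alpha x + z)] = e^{\alpha^2/2}\cosh z$ and $\E_\gamma[\cosh^2(\alpha x + z)] = \tfrac12\bigl(1 + e^{2\alpha^2}\cosh 2z\bigr)$, which follow from $\E_\gamma[e^{tx}] = e^{t^2/2}$ and the parity of $\cosh$ and $\sinh$; the first already gives $\E_\gamma[\cosh(\alpha x + z)] \ge e^{\alpha^2/2}$, so the content of the claim is that a small relative-entropy perturbation of $\gamma$ can decrease this quantity by at most $\epsilon$.

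In the small-shift regime $0 \le z \le Z_0$, for a constant $Z_0$ chosen below, I apply the variational inequality with $f = -\lambda\,\cosh(\alpha x + z)$ for a small $\lambda > 0$, obtaining $\E_P[\cosh(\alpha x + z)] \ge -\delta/\lambda - \tfrac1\lambda \log \E_\gamma[e^{-\lambda \cosh(\alpha x + z)}]$. Using $e^{-t} \le 1 - t + t^2/2$ for $t \ge 0$ and then $\log(1+u) \le u$, the right-hand side is at least $\E_\gamma[\cosh(\alpha x + z)] - \tfrac\lambda2 \E_\gamma[\cosh^2(\alpha x + z)] - \delta/\lambda \ge e^{\alpha^2/2} - \tfrac\lambda2 M - \delta/\lambda$, where $M := \tfrac12\bigl(1 + e^{2\alpha_0^2}\cosh 2Z_0\bigr)$ is a uniform upper bound on the second moment over all $\alpha \le \alpha_0$ and all $z \le Z_0$ (and $M \ge 1$). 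Choosing $\lambda = \epsilon/M$ and then $\delta \le \epsilon^2/(2M)$ makes both error terms at most $\epsilon/2$, which settles this regime.

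In the large-shift regime $z > Z_0$, the target $e^{\alpha^2/2} \le e^{\alpha_0^2/2} =: K$ is a fixed constant and there is ample slack, so a crude convexity estimate suffices: $\cosh(\alpha x + z) \ge \cosh z + \alpha \sinh z \cdot x$ gives $\E_P[\cosh(\alpha x + z)] \ge \cosh z - \alpha_0 \sqrt{2\delta}\,\sinh z \ge \cosh z\,(1 - \alpha_0\sqrt{2\delta})$, where I use the one-sided mean bound $\E_P[x] \ge -\sqrt{2\delta}$, obtained by applying the variational inequality to $f = tx$ and optimizing over $t < 0$ with $\E_\gamma[e^{tx}] = e^{t^2/2}$. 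Taking $\delta$ small enough that $\alpha_0\sqrt{2\delta} \le \tfrac12$ and $Z_0$ large enough that $\tfrac12\cosh Z_0 \ge K$ (e.g. $Z_0 = \log 4 + \alpha_0^2/2$) then gives $\E_P[\cosh(\alpha x + z)] \ge K \ge e^{\alpha^2/2}$. Setting $\delta := \min\{\epsilon^2/(2M),\ 1/(8\alpha_0^2)\}$, combining the two regimes, and undoing the reflection reduction proves the statement; I may also assume $\epsilon \le 1$ from the start, since the conclusion only weakens as $\epsilon$ increases.

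The main obstacle, and the reason for the case split, is that the second moment $\E_\gamma[\cosh^2(\alpha x + z)]$ grows like $\cosh^2 z$ as $|z| \to \infty$, so no single $\lambda$ makes the Gibbs-inequality argument beat $e^{\alpha^2/2}$ uniformly in $z$; one has to observe that for large $z$ the sharp constant $e^{\alpha^2/2}\cosh z$ is far more than needed, and that a first-order convexity bound together with crude one-sided control of the mean of $P$ already suffices there. A minor point to handle carefully is that dividing by $-\lambda$ and taking logarithms is legitimate because $\E_\gamma[e^{-\lambda\cosh(\alpha x + z)}]$ is automatically a number in $(0,1)$, which also forces $1 - \lambda\E_\gamma[\cosh(\alpha x + z)] + \tfrac{\lambda^2}{2}\E_\gamma[\cosh^2(\alpha x + z)] > 0$ with no separate check.
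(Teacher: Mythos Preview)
Your argument is correct. The paper does not give its own proof of this fact: it is quoted verbatim as a result of Chakrabarti and Regev~\cite{chakrabarti12} and simply invoked as a black box inside the proof of Theorem~\ref{thm:gaussian}. So there is no ``paper's proof'' to compare against; what you have written is a self-contained proof the paper does not supply.

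On the substance: your two-regime split is exactly the right idea, and both halves go through. In the bounded-$z$ regime the Donsker--Varadhan bound with $f=-\lambda\cosh(\alpha x+z)$, the quadratic upper bound $e^{-t}\le 1-t+t^2/2$, and the explicit Gaussian moments combine to give the loss $\tfrac{\lambda}{2}M+\delta/\lambda$, and your choice $\lambda=\epsilon/M$, $\delta\le \epsilon^2/(2M)$ is optimal up to constants. In the large-$z$ regime the tangent-line inequality $\cosh(\alpha x+z)\ge \cosh z+\alpha x\sinh z$ together with the Pinsker-type mean bound $\E_P[x]\ge -\sqrt{2\delta}$ (itself a clean consequence of Donsker--Varadhan with $f=tx$) is more than enough, since $\sinh z<\cosh z$ and $\tfrac12\cosh Z_0\ge e^{\alpha_0^2/2}$ for $Z_0=\log 4+\alpha_0^2/2$. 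The order of quantifiers is also handled correctly: $Z_0$ depends only on $\alpha_0$, hence so does $M$, and your final $\delta=\min\{\epsilon^2/(2M),\,1/(8\alpha_0^2)\}$ depends only on $\epsilon$ and $\alpha_0$, as required.
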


\begin{thm}[Chakrabarti and Regev~{\cite[Theorem 3.1]{chakrabarti12}}]
\label{thm:closedistrib}
For all $\epsilon,\delta > 0$ and large enough $N$, the following holds. Let $A \subseteq \R^N$ satisfy $\gamma^N(A) \ge \eee^{-\epsilon^2 N}$. Then, for all but an $\eee^{-\delta N / 36}$ measure of unit vectors $y \in \R^N$, the distribution of $\braket{x,y}$ where $x \sim \gamma^N|_A$ is equal to the distribution of $\alpha X + Y$ for some $\alpha \in [1-\delta,1]$ and random variables $X$, $Y$ satisfying $D_\gamma(X\mid Y) \le \epsilon$.
\end{thm}

We will prove the following result:

\begin{thm}
\label{thm:gaussian}
For all $c,\epsilon > 0$ there exists $\delta > 0$ such that for large enough $N$ and $0 \le \eta \le c/\sqrt{N}$, all $A,B \subseteq \R^N$ such that $\gamma^N(A),\gamma^N(B) \ge \eee^{-\delta N}$, and all $s$,
\[ \frac{1}{2} \left( \eee^{s} \Pr_{(x,y) \sim \Xi_{-\eta}}[x \in A \wedge y \in B] + \eee^{-s} \Pr_{(x,y) \sim \Xi_{\eta}}[x \in A \wedge y \in B] \right) \ge (1-\epsilon)\gamma^N(A) \gamma^N(B). \]
\end{thm}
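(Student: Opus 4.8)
The plan is to prove Theorem \ref{thm:gaussian} by following the geometric argument of Chakrabarti and Regev \cite{chakrabarti12}, tracking carefully where the skew parameter $s$ enters and verifying that the $\cosh$-based estimate of Fact \ref{claim:cosh} still closes the loop. The starting observation is a rotation/tensorisation trick: the joint distribution of $(x,y)$ under $\Xi_{\pm\eta}$ decomposes coordinate-wise, and for each coordinate the pair $(x_i, y_i)$ is a two-dimensional Gaussian whose correlation is $\pm\eta$. Writing $\theta$ for the angle with $\cos\theta = \eta$, the map $x \mapsto y$ under $\Xi_\eta$ is ``rotate each coordinate-plane by $\theta$ and add independent Gaussian noise'', and under $\Xi_{-\eta}$ it is rotation by $\pi-\theta$. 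Since $\eta = O(1/\sqrt N)$, we have $\theta$ close to $\pi/2$, i.e.\ the two conditional distributions are near-orthogonal perturbations of each other.

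\textbf{Key steps, in order.} First I would reduce to the case where $A$ and $B$ are replaced by their ``typical'' parts: condition on $\|x\|$ and $\|y\|$ lying in $[\sqrt N(1-o(1)), \sqrt N(1+o(1))]$, which costs only a $(1-o(1))$ factor and lets us treat $A,B$ as subsets of a thin spherical shell of radius $\approx\sqrt N$. Second, fix $x \in A$ and analyse $\Pr_{y}[y \in B \mid x]$ under $\Xi_{\pm\eta}$; by a change of measure this probability is $\E_{y \sim \Xi_0}[\mathbf{1}_B(y)\, w_{\pm}(x,y)]$ where $w_\pm$ is the likelihood ratio, which for Gaussians has the explicit form $w_\pm(x,y) = \exp\!\big(\pm\eta\,\hsip{x}{y} \cdot c_N - \text{(small quadratic term)}\big)$ for an appropriate normalising $c_N = 1/(1-\eta^2) \cdot (1) \approx 1$; the key point is that the dominant term is linear in $\hsip{x}{y}$ with coefficient $\approx\eta$. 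Third, form the quantity $\tfrac12(e^s w_- + e^{-s} w_+)$: averaging over the Gaussian randomness this becomes, up to the controlled quadratic corrections, $\E[\cosh(\eta\,\hsip{x}{y} - s) \cdot (\text{deterministic factor}) ]$ — this is precisely where the skew $s$ shifts the argument of $\cosh$ from $z$ to $z - s$, matching the $z \ne 0$ case of Fact \ref{claim:cosh}. Fourth, to apply Fact \ref{claim:cosh} with $\alpha = \eta\sqrt N = O(1)$ playing the role of $\alpha_0$, I need the distribution of the normalised projection $\hsip{x}{y}/\sqrt N$ (over $y \sim \gamma^N$ restricted to $B$, for a fixed typical $x$) to have small relative entropy to the standard Gaussian; this is where the lower bound $\gamma^N(B) \ge 2^{-\delta N}$ is used, exactly as in \cite{chakrabarti12}: a set of Gaussian measure $\ge 2^{-\delta N}$ has $D_\gamma$ of its one-dimensional marginal bounded by $\delta$ (up to constants), because conditioning on a set of measure $2^{-\delta N}$ raises entropy-like quantities by at most $O(\delta N)$ per coordinate on average. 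Fifth, integrate the resulting pointwise bound $\tfrac12(e^s w_-(x,y) + \cdots) \ge (1 - \epsilon)$ over $x \in A$ and (the typical part of) $y \in B$ against $\gamma^N \times \gamma^N$, recovering the claimed $(1-\epsilon)\gamma^N(A)\gamma^N(B)$ on the right-hand side.

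\textbf{Main obstacle.} The delicate point — and the one place the argument really is a \emph{generalisation} rather than a verbatim copy of \cite{chakrabarti12} — is controlling the interaction between the skew and the quadratic error terms in the likelihood ratio $w_\pm$. In the unskewed case one only needs $\cosh(z) \ge e^{\alpha^2/2} - \epsilon$ at $z = 0$, so the quadratic corrections are multiplicative $1 \pm o(1)$ factors that are harmless; with the shift by $s$, one must check that the bound $\E[\cosh(\alpha x - s)] \ge e^{\alpha^2/2} - \epsilon$ of Fact \ref{claim:cosh} is genuinely uniform in $s$ (which it is, as stated) \emph{and} that the quadratic terms, which also depend on $\|y\|$ and hence mildly on $s$ through the change of measure, do not blow up for large $|s|$. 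Because $\cosh(z-s) \ge \tfrac12 e^{|z-s|} \ge \tfrac12 e^{|s|-|z|}$, large $s$ actually only helps the lower bound, so the resolution is to split into $|s| = O(1)$ (handle by Fact \ref{claim:cosh} directly) and $|s|$ large (handle by the crude $\tfrac12 e^{|s| - |z|}$ bound together with a sub-Gaussian tail estimate on $\hsip{x}{y}/\sqrt N$ to see $\E[e^{-|z|}]$ is bounded below by a constant). Once this case split is in place, the deduction of Lemma \ref{lem:tech} from Theorem \ref{thm:gaussian} is the routine Gaussian-to-Boolean transfer: embed $\{\pm1\}^N$ into $\R^N$ by a central limit / Lindeberg-type argument so that $\xi_p^N$ rectangles pull back to (approximate) Gaussian rectangles of comparable measure, with the condition $\xi_0(R) \ge 2^{-\delta N}$ matching $\gamma^N(A), \gamma^N(B) \ge 2^{-\delta' N}$ for a suitable $\delta'$.
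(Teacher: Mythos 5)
Your proposal is correct and follows essentially the same route as the paper's proof (which is itself Chakrabarti--Regev's argument with the skew tracked through): lower-bound $e^{s}p_{-\eta}$ and $e^{-s}p_{\eta}$ via the conditional change-of-measure bound, combine the two exponentials into $\cosh\left(\eta\braket{x,y}/(1-\eta^2)-s\right)$, and invoke Fact \ref{claim:cosh} on the low-relative-entropy projection supplied by the hypotheses $\gamma^N(A),\gamma^N(B)\ge 2^{-\delta N}$. One small simplification: since the quadratic prefactors do not depend on $s$ and Fact \ref{claim:cosh} is already stated uniformly in the shift $z$, your case split between $|s|=O(1)$ and large $|s|$ is harmless but unnecessary.
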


Chakrabarti and Regev's Theorem 3.5 is Theorem \ref{thm:gaussian} with $s=0$~\cite{chakrabarti12}. The proof of Theorem \ref{thm:gaussian} is essentially identical to this special case. However, for completeness and the reader's convenience, we include a full proof. In the proof, $\beta_1,\dots,\beta_7$ are positive constants that have to be sufficiently small.

\begin{proof}[Proof of Theorem \ref{thm:gaussian}]
Let
\[ A' := \{x \in A : (1-\beta_1)N \le \|x\|^2 \le (1+\beta_1)N \} \]
and similarly define $B'$. For sufficiently small $\delta$, by concentration of Gaussian measure, we have
\[ \gamma^N(A') \ge \gamma^N(A) - \beta_2 \eee^{-\delta N} \ge (1-\beta_2) \gamma^N(A) \]
and similarly for $B'$. Write, for any $-1 \le \zeta \le 1$,
\[ p_\zeta \coloneqq \Pr_{(x,y) \sim \Xi_{\zeta}}[x \in A \wedge y \in B]. \] 
Then
\beas
p_\zeta &\ge& \Pr_{(x,y) \sim \Xi_{\zeta}}[x \in A' \wedge y \in B']\\
&=& (2\pi)^{-N/2} (2\pi(1-\eta^2))^{-N/2} \int 1_{A'}(x) 1_{B'}(y) \eee^{-\|x\|^2/2} \eee^{-\|y-\eta x\|^2/(2(1-\eta^2))} \mathrm{d}x\, \mathrm{d}y\\
&=& (1-\eta^2)^{-N/2} \E_{x \sim \gamma^N|_{A'},y \sim \gamma^N|_{B'}} \left[\eee^{-\eta^2 \|x\|^2/(2(1-\eta^2))} \eee^{-\eta^2 \|y\|^2/(2(1-\eta^2))} \eee^{\eta\braket{x,y}/(1-\eta^2)} \right] \gamma^N(A') \gamma^N(B')\\
&\ge& (1-\eta^2)^{-N/2} \eee^{-\eta^2(1+\beta_1)N/(1-\eta^2)}  \E_{x \sim \gamma^N|_{A'},y \sim \gamma^N|_{B'}} \left[\eee^{\eta\braket{x,y}/(1-\eta^2)} \right] \gamma^N(A') \gamma^N(B')
\eeas
%
%
%
%
Applying this inequality with $\zeta = \pm \eta$, we obtain that the quantity we wish to bound, which we can write as $\frac{1}{2}(\eee^s p_{-\eta} + \eee^{-s} p_\eta)$, is lower-bounded by
\be \label{eq:lb} (1-\eta^2)^{-N/2} \eee^{-\eta^2(1+\beta_1)N/(1-\eta^2)} \E_{x \sim \gamma^N|_{A'},y \sim \gamma^N|_{B'}}\left[\frac{1}{2} \eee^s \eee^{-\eta \braket{x,y}/(1-\eta^2)} + \frac{1}{2} \eee^{-s} \eee^{\eta \braket{x,y}/(1-\eta^2)}\right] \gamma^N(A') \gamma^N(B'). \ee
We next observe that
\[ \frac{1}{2} \eee^s \eee^{-\eta \braket{x,y}/(1-\eta^2)} + \frac{1}{2} \eee^{-s} \eee^{\eta \braket{x,y}/(1-\eta^2)} = \cosh(\eta \braket{x,y}/(1-\eta^2) - s). \]
Let $B'' \subseteq B'$ be the set of all $y \in B'$ for which
\[ \E_{x \sim \gamma^N|_{A'}} [\cosh(\eta \braket{x,y}/(1-\eta^2) - s) ] \le (1-\beta_3) \eee^{(\eta/(1-\eta^2))^2(1-\beta_1)N/2}. \] 
We will show that this set is relatively small, $\gamma^N(B'') \le \beta_4 \gamma^N(B')$, implying that
\[ \E_{x \sim \gamma^N|_{A'},y \sim \gamma^N|_{B'}} [\cosh(\eta \braket{x,y}/(1-\eta^2) - s) ] \ge (1-\beta_4)(1-\beta_3) \eee^{(\eta/(1-\eta^2))^2(1-\beta_1)N/2} \]
and hence
\beas (\ref{eq:lb}) &\ge& (1-\eta^2)^{-N/2} \eee^{-\eta^2(1+\beta_1)N/(1-\eta^2)} (1-\beta_4)(1-\beta_3) \eee^{(\eta/(1-\eta^2))^2(1-\beta_1)N/2} \gamma^N(A') \gamma^N(B')\\
&\ge& \eee^{N\eta^2/2} \eee^{-\eta^2(1+\beta_1)N/(1-\eta^2)} (1-\beta_4)(1-\beta_3) \eee^{(\eta/(1-\eta^2))^2(1-\beta_1)N/2} (1-\beta_2)^2 \gamma^N(A) \gamma^N(B)\\
&\ge& (1-\epsilon) \gamma^N(A) \gamma^N(B)
\eeas
for sufficiently small $\beta_1$, $\beta_2$, $\beta_3$, $\beta_4$ and sufficiently large $N$. Assume the contrary for a contradiction, i.e.\ that $\gamma^N(B'') > \beta_4 \gamma^N(B') \ge \beta_4(1-\beta_2)\eee^{-\delta N}$.

Let $r \in [\sqrt{(1-\beta_1)N},\sqrt{(1+\beta_1)N}]$ be such that the Haar measure of points in $B''$ of norm $r$, normalised (and hence restricted to the sphere), is at least $\gamma^N(B'')$; existence of such an $r$ follows from spherical symmetry of the Gaussian distribution and the definition of $B''$. Next, apply Theorem \ref{thm:closedistrib} with $\epsilon$ chosen to be $\beta_5$, $\delta$ chosen to be $\beta_6$, and $A$ chosen to be $A'$. Choosing $\delta$ (in the present theorem) to be small enough, there exists $y \in B''$ such that the distribution of $\braket{x,y}$, where $x \sim \gamma^N|_{A'}$, is given by $\alpha r X + r Y$ for some $\alpha \in [1-\beta_6,1]$ and random variables $X$, $Y$ satisfying $D_\gamma(X\mid Y) \le \beta_5$. This in turn implies that
\[ \Pr_Y[D_\gamma(X|Y) \le \sqrt{\beta_5}] \ge 1 - \sqrt{\beta_5}. \]
%
%
By Proposition \ref{claim:cosh}, for sufficiently small $\beta_7$,
\beas
\E_{x \sim \gamma^N|_{A'}} [\cosh(\eta \braket{x,y}/(1-\eta^2) - s) ] &=& \E[\cosh(\eta (\alpha r X + r Y) /(1-\eta^2) - s) ]\\
&\ge& (1-\sqrt{\beta_5}) (\eee^{(\alpha r \eta/(1-\eta^2))^2/2}-\beta_7) \\
&\ge& (1-\sqrt{\beta_5}) (\eee^{(\eta/(1-\eta^2))^2 (1-\beta_6)^2(1-\beta_1)N/2}-\beta_7) \\
&>& (1-\beta_3) \eee^{(\eta/(1-\eta^2))^2(1-\beta_1)N/2},
\eeas
contradicting the assumption that $y \in B''$.
\end{proof}

We can now prove the corollary we need for the boolean cube (which is Lemma \ref{lem:tech} stated in a slightly more general form).

\begin{cor}
\label{cor:cube}
For all $c,\epsilon > 0$ there exists $\delta > 0$ such that for large enough $N$ and $0 \le p \le c/\sqrt{N}$, all rectangles $R \subseteq \{\pm1\}^N \times \{\pm1\}^N$ such that $\xi_0(R) \ge 2^{-\delta N}$, and all $s$,
\[ \frac{1}{2} \left( \eee^{s} \xi_{-p}(R) + \eee^{-s} \xi_p(R) \right) \ge (1-\epsilon)\xi_0(R). \]
\end{cor}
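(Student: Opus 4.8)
The plan is to deduce Corollary~\ref{cor:cube} from its Gaussian analogue, Theorem~\ref{thm:gaussian}, by a sign-rounding argument, exploiting the elementary fact that the coordinatewise signs of a standard bivariate Gaussian pair of correlation $\eta$ form a pair of $\xi$-correlated bits with correlation parameter $q(\eta):=\tfrac{2}{\pi}\arcsin\eta$. Write $\sgn\colon\R^N\to\{\pm1\}^N$ for the coordinatewise sign map (defined outside a set of measure zero). Given a rectangle $R=A\times B\subseteq\{\pm1\}^N\times\{\pm1\}^N$ with $\xi_0(R)\ge 2^{-\delta N}$, I would lift it to the orthant-unions $A':=\sgn^{-1}(A)$ and $B':=\sgn^{-1}(B)$ in $\R^N$. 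Since $\sgn(x)$ is uniform on $\{\pm1\}^N$ when $x\sim\gamma^N$, we get $\gamma^N(A')=\mu(A)$ and $\gamma^N(B')=\mu(B)$, where $\mu$ is the uniform distribution on $\{\pm1\}^N$; as $\xi_0=\mu\times\mu$ this gives $\gamma^N(A')\gamma^N(B')=\xi_0(R)$, and since each of the two factors is at most $1$, the hypothesis forces $\gamma^N(A')\ge 2^{-\delta N}$ and $\gamma^N(B')\ge 2^{-\delta N}$ as well --- exactly the size condition Theorem~\ref{thm:gaussian} requires, with the same $\delta$.

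Next I would match the correlation parameters. For $(x,y)\sim\Xi_\zeta$ the coordinate pairs $(x_i,y_i)$ are i.i.d.\ centred bivariate Gaussians with correlation $\zeta$, so by Sheppard's formula $\Pr[\sgn x_i=\sgn y_i]=\tfrac12+\tfrac1\pi\arcsin\zeta$; independence across $i$ then shows $(\sgn x,\sgn y)$ is distributed exactly as $\xi_{q(\zeta)}$, and $q$ is odd. Hence $\Pr_{(x,y)\sim\Xi_{\pm\eta}}[x\in A'\wedge y\in B']=\xi_{\pm q(\eta)}(R)$ for all $\eta$. Given the target $0\le p\le c/\sqrt N$, set $\eta:=\sin(\pi p/2)$, so $q(\eta)=p$ and, for $N$ large enough that $\pi c/(2\sqrt N)\le1$, also $0\le\eta\le\pi p/2\le(\pi c/2)/\sqrt N$. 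Taking $\delta$ to be the constant supplied by Theorem~\ref{thm:gaussian} for the parameters $(\pi c/2,\epsilon)$ and applying that theorem to $A',B'$ with this $\eta$ and arbitrary $s$, we obtain
\begin{align*}
\tfrac12\!\left(e^{s}\xi_{-p}(R)+e^{-s}\xi_{p}(R)\right)
&=\tfrac12\!\left(e^{s}\Pr_{\Xi_{-\eta}}[x\in A'\wedge y\in B']+e^{-s}\Pr_{\Xi_{\eta}}[x\in A'\wedge y\in B']\right)\\
&\ge(1-\epsilon)\,\gamma^N(A')\gamma^N(B')=(1-\epsilon)\,\xi_0(R),
\end{align*}
which is the assertion. (Lemma~\ref{lem:tech} then follows by taking $c=\sqrt b$, $\epsilon=1/3$.)

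I do not expect a genuine obstacle here: every step of the reduction is an exact identity, since signs of correlated Gaussians \emph{are} correlated bits, so nothing is lost in passing from the Gaussian cube to the Boolean cube; all of the analytic work is already done in Theorem~\ref{thm:gaussian}. The only points needing care are the handling of the measure-zero orthant boundaries, the requirement that $N$ be large enough for $\eta=\sin(\pi p/2)$ to lie in the admissible range $[0,(\pi c/2)/\sqrt N]$, and bookkeeping of the harmless factor $\pi/2$ relating the Gaussian correlation $\eta$ to the Boolean correlation $p$.
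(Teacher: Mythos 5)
Your proposal is correct and is essentially the paper's own proof: the paper also lifts $R=A\times B$ to the sign-preimages $A',B'\subseteq\R^N$, uses $\gamma^N(A')\gamma^N(B')=\xi_0(R)$ to verify the measure hypothesis, matches correlations via Sheppard's formula (the paper writes $p=1-\tfrac{2}{\pi}\arccos\eta$, which equals your $\tfrac{2}{\pi}\arcsin\eta$), and applies Theorem~\ref{thm:gaussian} with the constant rescaled by the harmless $O(1)$ factor relating $\eta$ to $p$.
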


\begin{proof}
Again, the argument is essentially the same as in~\cite{chakrabarti12}. For $R = A \times B$, let $A' = \{x \in \R^N: \sgn(x) \in A\}$, and similarly for $B'$. Then one can check that, for any $\eta \in [-1,1]$,
\[ \Pr_{(x,y) \sim \Xi_{\eta}}[x \in A' \wedge y \in B'] = \xi_p(R) \]
for $p = 1 - \frac{2}{\pi} \arccos \eta$. In particular, taking $\eta=0$,
\[ \min\{\gamma^N(A'), \gamma^N(B') \} \ge \gamma^N(A') \gamma^N(B') = \Pr_{(x,y) \sim \Xi_0}[x \in A' \wedge y \in B'] = \xi_0(R) \ge 2^{-\delta N}. \]
For $p \le c/\sqrt{N}$ and sufficiently large $N$, $\eta \le c'/\sqrt{N}$ for some constant $c'$, so we can apply Theorem \ref{thm:gaussian} to infer the claim.
\end{proof}


\section{Proofs of further technical propositions}
\label{sec:finalproofs}

In this section we prove some claims about the distributions $\xi_p$, $\xi'_p$ used in the proof of Lemma \ref{lem:prototoaccprob}. Recall that $\xi^N_p$ is the distribution on pairs $(x,y) \in \{\pm1\}^N$ defined in Definition \ref{dfn:xi}. 

\begin{repfact}{claim:avgprob}
\[ \E_{(x,y) \sim \xi^N_p}\left[ \left( \frac{\braket{x,y}}{N} \right)^2 \right] =  \frac{1}{N} + \left(1- \frac{1}{N}\right) p^2. \]
\end{repfact}

\begin{proof}
\beas \E_{(x,y) \sim \xi^N_p}\left[ \left( \frac{\braket{x,y}}{N} \right)^2 \right] &=& \frac{1}{N^2} \E_{(x,y) \sim \xi^N_p}\left[ \left( \sum_i  x_i y_i \right)^2 \right]\\
&=& \frac{1}{N} + \frac{1}{N^2} \sum_{i \neq j} \E_{(x,y) \sim \xi^N_p}[x_i y_i x_j y_j ]\\
&=& \frac{1}{N} + \frac{1}{N^2} \sum_{i \neq j} \E_{(x,y) \sim \xi^N_p}[x_i y_i]  \E_{(x,y) \sim \xi_p}[x_j y_j ]\\
&=& \frac{1}{N} + \frac{N(N-1)}{N^2} (\E_{(x,y) \sim \xi^N_p}[x_1 y_1])^2\\
&=& \frac{1}{N} + \left(1- \frac{1}{N}\right) \left( \frac{1+p}{2} - \frac{1-p}{2} \right)^2\\
&=& \frac{1}{N} + \left(1- \frac{1}{N}\right) p^2.
\eeas
\end{proof}

\begin{repfact}{fact:techbound}
Let $\xi'_p$ be the distribution on pairs $x',y' \in \{\pm1\}^{2N}$ formed by choosing $(x,y) \sim \xi_p$, then setting $x' = \sigma(x,x'')$, $y' = \sigma(y,y'')$, where $x'',y'' \in \{\pm1\}^N$ are fixed strings such that $\braket{x'',y''} = 0$, and $\sigma \in S_{2N}$ is a random permutation of the indices of the strings. Then there exists a universal constant $C$ such that, for sufficiently large $N$ and any $\Delta$, and $p \in [-0.01,0.01]$,
\[ \Pr_{(x',y')\sim \xi'_p}[\braket{x',y'} = \Delta] \le C \eee^{p^2 N} \Pr_{(x',y') \sim \xi^{2N}_0}[\braket{x',y'} = \Delta]. \]
\end{repfact}

\begin{proof}
Assume $\Delta$ is an even integer, otherwise the claim is trivial. By the definition of the distributions $\xi$, $\xi'$, we have
\[ \Pr_{(x',y') \sim \xi'_{p}}[\braket{x',y'} = \Delta] = \left(\frac{1+p}{2} \right)^{(N+\Delta)/2} \left(\frac{1-p}{2} \right)^{(N-\Delta)/2} \binom{N}{\frac{N+\Delta}{2}},  \]
\[ \Pr_{(x',y') \sim \xi^{2N}_0}[\braket{x',y'} = \Delta] = 2^{-2N} \binom{2N}{N + \frac{\Delta}{2}}. \]
Setting $k=(N+\Delta)/2$, the ratio $R$ of these two quantities is thus
\[ R = 2^N \left(1+p \right)^k \left(1-p \right)^{N-k} \frac{\binom{N}{k}}{ \binom{2N}{k+N/2}} \le 2^N \eee^{p(2k-N)} \frac{\binom{N}{k}}{ \binom{2N}{k+N/2}}. \]
We now upper-bound this expression using the binomial coefficient bounds, valid for all integer $\ell \in [1,N-1]$,
\be \label{eq:binombounds} \sqrt{\frac{N}{8\ell(N-\ell)}} \eee^{N h(\ell/N)} \le \binom{N}{\ell} \le \sqrt{\frac{N}{2\pi \ell(N-\ell)}} \eee^{N h(\ell/N)}, \ee
where $h(x) \coloneqq -x \ln x - (1-x) \ln (1-x)$ is the binary entropy measured in nats~\cite{macwilliams83}.

First consider the case $k \in [0,0.1 N] \cup [0.9N,N]$. For these values of $k$, using $k \le N$ and (\ref{eq:binombounds}), there is a universal constant $C$ such that
\[ 2^N \eee^{p(2k-N)} \frac{\binom{N}{k}}{ \binom{2N}{k+N/2}} \le 2^N \eee^{Np} \frac{\binom{N}{0.1N}}{ \binom{2N}{N/2}} \le C 2^N \eee^{Np} \eee^{N h(0.1) - 2N h(1/4) } \le C \eee^{N(p-0.1)}, \]
implying that, for $p \le 0.01$, $R \le 1$ for sufficiently large $N$.

Now consider the range $k \in [0.1N,0.9N]$. Here we will use the inequality, which follows from (\ref{eq:binombounds}), that for some universal constant $C'$,
\[ R \le C' 2^N \eee^{p(2k-N)} \sqrt{\frac{(k+N/2)(3N/2-k)}{k(N-k)}} \eee^{N(h(k/N) - 2h(k/(2N) + 1/4))}. \]
Defining $x = k/N$ and combining terms, we have
\[ R \le C' \sqrt{\frac{(x+1/2)(3/2-x)}{x(1-x)}} \eee^{N(\ln 2 + p(2x-1) + h(x) - 2h(x/2 + 1/4))}. \]
For these values of $x$,
\[ R \le C'' \eee^{N(\ln 2 + p(2x-1) + h(x) - 2h(x/2 + 1/4))} \]
for some new universal constant $C''$, so to prove the bound claimed in the lemma it suffices to show that $f(x) \coloneqq \ln 2 + p(2x-1) + h(x) - 2h(x/2 + 1/4) \le p^2$ for all $x \in [0,1)$. By concavity of $h$, $\ln 2 + h(x) = h(1/2) + h(x) \le 2h(x/2+1/4)$, so $f(x) \le p(2x-1)$, which is negative for $x<1/2$. So, defining $F(x) \coloneqq f(x+1/2) - 2px = \ln 2 + h(x+1/2) - 2h(x/2 + 1/2)$, we want to find an upper bound on $F(x)$ over $x \in [0,1/2)$. We will show that $F(x) \le -x^2$ and hence $f(x+1/2) \le 2px - x^2 \le p^2$, where we take the maximum over $x \in [0,1/2)$.

We have $F(0) = 0$, and
\[ F'(x) \coloneqq \frac{d}{dx} F(x) = \ln \left(\frac{1-2x}{1+2x} \right) - \ln \left(\frac{1-x}{1+x} \right), \]
%
so $F'(0) = 0$ as well. Thus it is sufficient to show that $F''(x) \le -2$ for all $x \in [0,1/2)$ to prove the desired bound. One can calculate that
\[ F''(x) = -\frac{2 + 4 x^2}{1 - 5 x^2 + 4 x^4} \le -2 \]
for $x$ in this range, completing the proof.
\end{proof}


\subsection*{Acknowledgements}

I would like to thank Laura Man\v{c}inska for helpful discussions on the topic of this paper, Tony Short for the ``anti-Holevo'' terminology, Sandu Popescu for pointing out ref.~\cite{galvao03}, and Scott Aaronson for notifying me of ref.~\cite{aaronson16}. I would also like to thank several referees for their very helpful comments on previous versions of this paper. This work was supported by  EPSRC Early Career Fellowship EP/L021005/1, EPSRC grant EP/R043957/1 and the QuantERA ERA-NET Cofund in Quantum Technologies implemented within the European Union's Horizon 2020 Programme (QuantAlgo project).

\appendix


\section{Classical upper bound for the vector in subspace problem}
\label{app:raz}

In this appendix we prove a general upper bound on the amount of classical communication required to solve a bounded-error (and therefore at least as hard) variant of the ``vector in subspace'' problem discussed in Section \ref{sec:intro}. In this version of the problem, Alice gets an $n$-qubit quantum state $\ket{\psi}$ and Bob gets a 2-outcome projective measurement $\{M,I-M\}$. Alice and Bob are promised that either $\braket{\psi|M|\psi} \ge 2/3$ or $\braket{\psi|M|\psi} \le 1/3$; their task is to output 1 in the first case, and 0 in the second.

We will show that there is a classical randomised protocol for this problem that communicates $O(\sqrt{N})$ bits, where as before we set $N=2^n$. This implies that any one-way quantum protocol for computing a partial boolean function on $n$ bits can be simulated by a classical protocol communicating $O(\sqrt{N})$ bits~\cite{kremer95}. This protocol was proposed by Raz~\cite{raz99} (and subsequently also described by Klartag and Regev~\cite{klartag11}), who stated this complexity bound, but no proof has appeared. We first note that we can assume that $\tr M = 2^{n-1}$ without loss of generality; if not, we embed $\ket{\psi}$ and $M$ appropriately in a space of dimension $N' \le 2N$ such that $\tr M = N'/2$, which does not substantially affect the complexity bounds.

We will need the following technical lemma:

\begin{lem}[Corollary of Bennett et al.~\cite{bennett05}]
\label{lem:randomproj}
Let $\ket{\psi}$ be picked from $\C^N$ according to Haar measure on the unit sphere, and let $P$ be the projector onto an $r$-dimensional subspace of $\C^N$. Then, for any $\delta \ge 0$,
\[ \Pr\left[\braket{\psi|P|\psi} \ge (1+\delta) \frac{r}{N}\right] \le \begin{cases} \exp(-r\delta^2/3) & [0 \le \delta \le 1]\\ \exp(-r\delta/3) & [\delta \ge 1] \end{cases} \]
\end{lem}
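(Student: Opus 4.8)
The plan is to reduce the statement to a standard large-deviation estimate for a ratio of Gamma random variables, which is what underlies the random-subspace concentration bound of \cite{bennett05}. First I would realise the Haar-random vector concretely: taking (without loss of generality) $P$ to be the projector onto the first $r$ coordinates of $\C^N$, write $\ket{\psi} = g/\|g\|$ where $g=(g_1,\dots,g_N)$ has i.i.d.\ standard complex Gaussian entries. Then $\braket{\psi|P|\psi} = S/(S+T)$ with $S=\sum_{i=1}^r |g_i|^2$ and $T=\sum_{i=r+1}^N |g_i|^2$ independent; since each $|g_i|^2$ is an $\mathrm{Exp}(1)$ variable, $S\sim\mathrm{Gamma}(r,1)$ and $T\sim\mathrm{Gamma}(N-r,1)$, so $\braket{\psi|P|\psi}$ has a $\mathrm{Beta}(r,N-r)$ law with mean $r/N$. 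The task is thus to bound the upper tail of this Beta variable.

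Next I would apply an exponential Markov (Chernoff) bound. Writing $t:=(1+\delta)r/N$ (the event is empty if $t\ge 1$, so assume $t<1$), the event $\{S/(S+T)\ge t\}$ is $\{(1-t)S-tT\ge 0\}$, and for any $\lambda\in(0,1/(1-t))$,
\[ \Pr\!\left[\braket{\psi|P|\psi}\ge t\right]\ \le\ \E\!\left[e^{\lambda((1-t)S-tT)}\right]\ =\ \bigl(1-\lambda(1-t)\bigr)^{-r}\bigl(1+\lambda t\bigr)^{-(N-r)}, \]
using the Gamma moment generating function $\E[e^{sW}]=(1-s)^{-k}$ for $W\sim\mathrm{Gamma}(k,1)$, $s<1$. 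Optimising over $\lambda$ --- equivalently, using that $S-r$ obeys a Bernstein-type bound with variance proxy $O(r)$ and subexponential scale $O(1)$ --- collapses the right-hand side to $\exp(-\Omega(r\min\{\delta^2,\delta\}))$, and carrying the constants and the (lower-order) finite-$N$ corrections through yields the two claimed exponents. In fact this is precisely the bound recorded in \cite{bennett05}, so the cleanest route is to do the Gaussian/Beta identification above and then quote that estimate rather than re-deriving it.

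The step I expect to be the main obstacle is pinning down the exact constant in the exponent: the plain Chernoff inequality above is loose by a polynomial-in-$r$ factor near $\delta\approx 1$, so to land exactly on $r\delta^2/3$ (resp.\ $r\delta/3$) one should keep the sharper Bernstein form of the tail bound, or simply invoke the sharp estimate of \cite{bennett05}. In the regime $r=\Theta(N)$, which is the one used in the application of this lemma, the bound has ample slack and the constant is not delicate.
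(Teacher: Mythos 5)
The paper itself gives no proof of this lemma: it is stated as a corollary of Bennett et al.~\cite{bennett05} and used as a black box. Your Gaussian realisation of the Haar vector, the identification $\braket{\psi|P|\psi}=S/(S+T)\sim\mathrm{Beta}(r,N-r)$ with $S\sim\mathrm{Gamma}(r,1)$, $T\sim\mathrm{Gamma}(N-r,1)$, and the exponential-Markov set-up are all correct, and your ``cleanest route'' --- do this identification and then quote the estimate of \cite{bennett05} --- is in substance exactly what the paper does.

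The one claim I would not let stand is that optimising over $\lambda$ ``yields the two claimed exponents.'' The optimised Chernoff bound is exactly $\exp\bigl(r\ln(1+\delta)+(N-r)\ln\bigl(1-\tfrac{\delta r}{N-r}\bigr)\bigr)$, which for $r\ll N$ tends to $\exp\bigl(-r(\delta-\ln(1+\delta))\bigr)$; near $\delta\approx 1$ the exponent per unit $r$ is $1-\ln 2\approx 0.307$, which is \emph{below} the claimed $\delta^2/3\approx 0.333$, and since the Chernoff exponent is tight to leading order for sums of i.i.d.\ exponentials, no Bernstein-type sharpening of this route can produce the constant $1/3$ uniformly in $r,N,\delta$ --- the issue is not merely a polynomial-in-$r$ prefactor, as you suggest. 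So either take the constants verbatim from the precise statement in \cite{bennett05}, or observe (as you correctly do at the end) that the lemma is only invoked with $r=1$ and large $\delta$, and with $r=\Theta(N)$ and small $\delta$, where the optimised exponent is $\approx r\delta$ respectively $\approx r\delta^2$ and any fixed constant in the exponent suffices for the appendix's argument; a version of the lemma with weaker explicit constants, which your calculation does deliver, would serve equally well there.
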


The protocol proceeds as follows. Alice and Bob use shared randomness to specify $K$ quantum states $\ket{\phi_1}, \dots, \ket{\phi_K}$, each picked independently according to Haar measure, for some $K$ (it will turn out that $K = 2^{\Theta(\sqrt{N})}$ suffices). Then Alice finds the state $\ket{\phi_i}$ such that $|\braket{\phi_i|\psi}|$ is maximised, and sends the identity of this state to Bob using $\lceil \log_2 K \rceil$ classical bits of communication. Bob then outputs 0 if $\braket{\phi_i|M|\phi_i} \le 1/2$, and 1 otherwise.

We first observe that $\ket{\phi_i}$ can be expressed as
\[ \ket{\phi_i} = \epsilon \ket{\psi} + \sqrt{1-\epsilon^2} \ket{\psi^\perp}, \]
where $\epsilon$ is a random variable such that $\epsilon = \braket{\psi|\phi_i}$, which can be taken to be real and positive such that $\epsilon = |\braket{\psi|\phi_i}|$, and $\ket{\psi^\perp}$ is a unit vector distributed uniformly at random in the subspace of states orthogonal to $\ket{\psi}$. This holds because the Haar measure is unitarily invariant, so for each $j$, $\ket{\phi_j}$ can be picked by choosing its inner product with each vector in an arbitrary orthonormal basis according to a complex Gaussian distribution, then normalising the resulting vector. 

Assume that $\braket{\psi|M|\psi} \le 1/3$ (the case $\braket{\psi|M|\psi} \ge 2/3$ is similar, swapping the roles of $M$ and $I-M$). Then the probability that Bob fails to output the correct answer is
\be \label{eq:bound} \Pr_{\ket{\phi_i}}\left[\braket{\phi_i|M|\phi_i} \ge \frac{1}{2}\right]. \ee
We can expand
\beas \braket{\phi_i|M|\phi_i} &=& \left(\epsilon \bra{\psi} + \sqrt{1-\epsilon^2} \bra{\psi^\perp} \right) M \left( \epsilon \ket{\psi} + \sqrt{1-\epsilon^2} \ket{\psi^\perp} \right)\\
&=& \epsilon^2 \braket{\psi|M|\psi} + 2\epsilon\sqrt{1-\epsilon^2} \Re(\braket{\psi^\perp|M|\psi}) + (1-\epsilon^2) \braket{\psi^\perp|M|\psi^\perp},
\eeas
and using $\braket{\psi|M|\psi} \le 1/3$ and a union bound, can upper-bound (\ref{eq:bound}) as
\[ \Pr\left[\braket{\phi_i|M|\phi_i} \ge \frac{1}{2}\right] \le \Pr\left[ 2\epsilon\sqrt{1-\epsilon^2} \Re(\braket{\psi^\perp|M|\psi}) \ge \frac{\epsilon^2}{12} \right] + \Pr\left[ (1-\epsilon^2) \braket{\psi^\perp|M|\psi^\perp} \ge \frac{1}{2} - \frac{5\epsilon^2}{12} \right]. \]
We bound each of the remaining two terms separately, conditioned on the event that $\epsilon \ge \epsilon_0 := c N^{-1/4}$ for some large constant $c$ (we will show that this event occurs with high probability at the end). For the first term, we use
\beas \Pr\left[ 2\epsilon\sqrt{1-\epsilon^2} \Re(\braket{\psi^\perp|M|\psi}) \ge \frac{\epsilon^2}{12} \right] &\le& \Pr\left[ |\braket{\psi^\perp|M|\psi}| \ge \frac{\epsilon_0}{24} \right] \\
&=& \Pr\left[ \frac{|\braket{\psi^\perp|(I-\proj{\psi})M|\psi}|^2}{\|(I-\proj{\psi})M\ket{\psi}\|^2} \ge \left(\frac{\epsilon_0}{24\|(I-\proj{\psi})M\ket{\psi}\|}\right)^2 \right]\\
&\le& \exp(-((\epsilon_0/24)^2(N-1)-1)/3),
\eeas
where in the last step we apply Lemma \ref{lem:randomproj} with $r=1$, $\delta = (\epsilon_0/24)^2(N-1)-1$ to $\ket{\psi^\perp}$ and the projector
\[ P = \frac{(I - \proj{\psi})M \proj{\psi} M (I - \proj{\psi}) }{\tr((I - \proj{\psi})M \proj{\psi} M (I - \proj{\psi})) } \]
and additionally use that $\epsilon_0 \gg N^{-1/2}$. For the second term, we use
\[ \braket{\psi^\perp|M|\psi^\perp} \le \braket{\psi^\perp|\Pi_{((I-\proj{\psi})M(I-\proj{\psi}))}|\psi^\perp}, \]
where $\Pi_X$ denotes the projector onto the support of $X$, and
\[ N/2 = \rank(M) \ge \rank((I-\proj{\psi})M(I-\proj{\psi})) = \rank((I-\proj{\psi})M) \ge \rank(M)-1 = N/2-1, \]
together with Lemma \ref{lem:randomproj} and the bound
\[ \frac{1}{1-\epsilon^2}\left(\frac{1}{2} - \frac{5\epsilon^2}{12}\right) = \frac{6-5\epsilon^2}{12(1-\epsilon^2)} \ge \frac{1}{2} + \frac{\epsilon^2}{12} \ge \frac{1}{2} + \frac{\epsilon_0^2}{12} \]
to obtain
\[ \Pr\left[\braket{\psi^\perp|M|\psi^\perp} \ge \frac{1}{1-\epsilon^2}\left(\frac{1}{2} - \frac{5\epsilon^2}{12} \right) \right] \le \exp( -CN\epsilon_0^4 ) \]
for some universal constant $C$. As $\epsilon_0 = \Omega(N^{-1/4})$, the sum of the two probabilities can be bounded above by an arbitrarily small constant. All that remains is to determine how large $K$ needs to be to achieve $\epsilon \ge \epsilon_0$ with high probability. Recall that $\epsilon$ was the largest absolute inner product between any of the random states $\ket{\phi_j}$ and the fixed state $\ket{\psi}$. For any $0 \le x \le 1$, we have
\[ \Pr_{\ket{\phi_j}}[ |\braket{\phi_j|\psi}|^2 \ge x] = (1-x)^{N-1}. \]
To see this, first note that the distribution obtained by measuring a Haar-random $N$-dimensional state $\ket{\phi_i}$ in an arbitrary orthonormal basis is uniform in the probability simplex~\cite{sykora74}. Geometrically, this corresponds to a standard simplex in $N-1$ spatial dimensions. Truncating this simplex by restricting one coordinate from the range $[0,1]$ to the range $[x,1]$ gives a geometrically similar simplex, whose volume must therefore be the volume of the original simplex multiplied by $(1-x)^{N-1}$. So, for any $0<D<N^{1/2}$,
\[ \Pr_{\ket{\phi_j}}[ |\braket{\phi_j|\psi}|^2 \ge D N^{-1/2}] = (1-D N^{-1/2})^{N-1} \ge (\eee^{-2D N^{-1/2}})^{N-1} \ge \eee^{-2D\sqrt{N}}. \]
It is therefore sufficient to choose $K = 2^{O(\sqrt{N})}$ for there to exist some $i \in \{1,\dots,K\}$ such that $|\braket{\phi_i|\psi}| \ge c N^{-1/4}$ with high probability for large enough $N$, corresponding to $O(\sqrt{N})$ bits of communication being required to specify $\ket{\phi_i}$.


\section{Direct lower bound on the deterministic query complexity of Fourier sampling}
\label{app:query}

In this appendix we give a simple direct proof of Corollary \ref{cor:query} in the special case of deterministic query algorithms; that is, algorithms that choose which bits to query deterministically, then sample from some distribution depending on the bits queried.

\begin{prop}
For sufficiently small constant $\epsilon > 0$, any deterministic classical query algorithm that solves Fourier Sampling on $N$ input bits must make $\Omega(N)$  queries.
\end{prop}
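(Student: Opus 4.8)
The plan is to exploit the fact that a deterministic query algorithm making $t$ queries can only ever see $2^t$ distinct query-response patterns, so it effectively partitions the set of inputs $h:\{0,1\}^n\to\{\pm1\}$ into at most $2^t$ classes, and outputs a fixed distribution on each class. I would consider the uniform distribution over all $2^N$ inputs $h$, and pick the target string $s=0^n$, so that the quantity the algorithm must approximate is $p_h(0^n) = (\frac{1}{N}\sum_x h(x))^2$. The key observation is that, conditioned on the $t$ queried bits of $h$, the remaining $N-t$ bits are still uniform; if $t = o(N)$ then the conditional distribution of $\frac{1}{N}\sum_x h(x)$ is essentially that of $\frac{1}{N}$ times a sum of $N-o(N)$ independent $\pm1$ variables (plus a bounded shift from the queried bits), so $\mathbb{E}[(\frac{1}{N}\sum_x h(x))^2 \mid \text{queries}]$ is $\Theta(1/N)$ with high probability over the queries, but the algorithm, having committed to a single output distribution $\widetilde p$ on its leaf, contributes a fixed value $\widetilde p(0^n)$ there.

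The main step is then a ``two-point'' argument analogous to the one sketched for the communication setting: on each leaf $L$, compare $\mathbb{E}_{h\in L}[p_h(0^n)]$ against $\widetilde p_L(0^n)$. Because the algorithm is $\epsilon$-accurate, $\mathbb{E}_h|\widetilde p_h(0^n) - p_h(0^n)| \le \epsilon/N$ (the full $\ell_1$ error is $\le\epsilon$, and $0^n$ is one of $N$ coordinates so it contributes at most $\epsilon/N$ on average — indeed we may pick the best coordinate as in the proof of Lemma~\ref{lem:prototoaccprob}). So it suffices to show that no single value $v = \widetilde p_L(0^n)$ can be within $O(\epsilon/N)$ of $p_h(0^n)$ for more than, say, a $1-c$ fraction of the $h\in L$. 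Concretely, on a leaf $L$ the distribution of $\braket{h,\mathbf 1} := \sum_x h(x)$ is (a shift of) a binomial on $N-t$ trials; since $t = o(N)$, this distribution puts constant mass on $|\braket{h,\mathbf 1}| \le \sqrt N$ (where $p_h(0^n) \le 1/N$) and also constant mass on $|\braket{h,\mathbf 1}| \ge 2\sqrt N$ (where $p_h(0^n)\ge 4/N$), by an anticoncentration / Gaussian-approximation estimate for the binomial. These two regimes differ by $3/N$, so any fixed $v$ is $\Omega(1/N)$-far from $p_h(0^n)$ on a constant fraction of $L$. Averaging over leaves (weighted by $|L|/2^N$) gives $\mathbb{E}_h|\widetilde p_h(0^n) - p_h(0^n)| = \Omega(1/N)$, contradicting the accuracy bound once $\epsilon$ is a small enough constant. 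Hence $t = \Omega(N)$.

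The step I expect to be the main obstacle — really the only non-routine one — is quantifying the anticoncentration of $p_h(0^n)$ on a leaf: I need that, for $t = o(N)$, the conditional law of $\frac{1}{N}\sum_x h(x)$ simultaneously has constant probability of landing in a region where the square is $\le 1/N$ and constant probability of landing where the square is $\ge 4/N$, \emph{uniformly over the $2^t$ leaves}, i.e. uniformly over the $\le t$ fixed bits and their arbitrary values. This follows from standard binomial tail and central-region estimates (e.g.\ the bounds~(\ref{eq:binombounds}) on binomial coefficients, or a Berry–Esseen bound), with the shift from the queried coordinates being $O(t) = o(\sqrt N)$ and hence negligible at the $\sqrt N$ scale; but making the ``uniformly over leaves'' part clean is where the care is needed. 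Everything else — the leaf decomposition of a deterministic decision tree, reducing from general target $s$ to $s = 0^n$, and the final averaging — is routine.
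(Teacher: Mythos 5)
There is a genuine gap, and it sits exactly at the step you flagged as the crux. Your anticoncentration claim is that, \emph{uniformly over all leaves}, the conditional law of $\sum_x h(x)$ puts constant mass on $\{|\sum_x h(x)|\le\sqrt N\}$ and constant mass on $\{|\sum_x h(x)|\ge 2\sqrt N\}$, and you justify it by saying the shift contributed by the queried coordinates is $O(t)=o(\sqrt N)$. That presumes $t=o(\sqrt N)$. To prove the stated $\Omega(N)$ bound you must derive a contradiction for every $t\le cN$ with $c$ a small constant, and in that regime a leaf can have queried partial sum $S$ with $|S|$ as large as $t\gg\sqrt N$; on such a leaf, writing $B$ for the sum of the $N-t$ still-uniform unqueried bits, $\Pr[|S+B|\le\sqrt N]$ is $e^{-\Omega(S^2/N)}$, so your two absolute windows do not both carry constant mass and the per-leaf lower bound collapses. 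As written, the argument only rules out $t=o(\sqrt N)$, i.e.\ it yields an $\Omega(\sqrt N)$ lower bound, not $\Omega(N)$.

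The repair is to anticoncentrate the \emph{value} $\bigl((S+B)/N\bigr)^2$ rather than $|S+B|$, recentring around the leaf's shift. The paper does this by letting the guess absorb the term $\bigl(\sum_{\mathrm{queried}} h(x)\bigr)^2$ and using the reverse triangle inequality to discard the cross term $2\bigl(\sum_{\mathrm{queried}} h\bigr)\bigl(\sum_{\mathrm{unqueried}} h\bigr)$, whose expected absolute value over leaves is only $O(\sqrt{tm})$ with $m=N-t$; what remains is to show no single number is close to $\bigl(\sum_{\mathrm{unqueried}} h\bigr)^2/N^2$ on average, which follows since that square is $\ge 2m$ and $\le m/2$ each with constant probability. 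This gives expected error at least $(Cm-D\sqrt{tm})/N^2$, which exceeds $\epsilon/N$ unless $m\le D'N$, i.e.\ $t=\Omega(N)$. Alternatively, you can keep your per-leaf framing but choose windows relative to the shift: for $S\ge 0$ (the other sign is symmetric), $B\in[0,\sqrt m]$ and $B\in[2\sqrt m,3\sqrt m]$ each occur with constant probability, and the corresponding values of $(S+B)^2$ are separated by at least $2S\sqrt m+3m\ge 3m=\Omega(N)$, so any fixed output value is $\Omega(1/N)$-far from $p_h(0^n)$ with constant probability on every leaf once $t\le cN$. Either repair closes the gap; the rest of your outline (reduction to the $s=0^n$ coordinate with average error $\epsilon/N$, leaf decomposition, averaging over leaves) matches the paper's proof.
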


Recall that in this problem we are given query access to a function $h:\{0,1\}^n \rightarrow \{\pm1\}$ (equivalently, to an arbitrary string of $N=2^n$ $\pm1$'s), and are asked to output a sample from any distribution $\widetilde{p}_h$ such that $\| \widetilde{p}_h - p_h \|_1 \le \epsilon$, where
 \[ p_h(s) = \left(\frac{1}{2^n} \sum_{x \in \{0,1\}^n} (-1)^{s \cdot x} h(x) \right)^2. \]

\begin{proof}
Following the same argument as at the start of Lemma \ref{lem:prototoaccprob}, from any classical algorithm which solves Fourier Sampling with $k$ queries, we can obtain an algorithm of the following form. Given oracle access to a bit-string $x \in \{\pm1\}^N$, the algorithm makes $k$ queries to elements of $x$ and accepts with probability $q_x$ such that
\[ \text{Err} := \E_x\left[ \left| q_x - \left(\frac{1}{N} \sum_i x_i \right)^2 \right| \right] \le \frac{\epsilon}{N}, \]
where $x$ is uniformly random. We first argue that we can assume that the algorithm deterministically queries the first $k$ bits of $x$ and its acceptance probability depends only on these. Any deterministic classical algorithm corresponds to a decision tree whose leaves are labelled with acceptance probabilities, and where we can assume without loss of generality that no variable occurs more than once on any path from the root to a leaf. Imagine the first query is to variable $x_j$, where $j \neq 1$, and let $x_{\bar{j}}$ denote the sequence $x_1,\dots,x_{j-1},x_{j+1},\dots,x_n$. Then
\[ \text{Err} = \frac{1}{2} \E_{x_{\bar{j}}} \left[ \left| q^+_{x_{\bar{j}}} - \left(\frac{1}{N} + \frac{1}{N} \sum_{i \neq j} x_i \right)^2 \right| \right] + \frac{1}{2} \E_{x_{\bar{j}}} \left[ \left| q^-_{x_{\bar{j}}} - \left(-\frac{1}{N} + \frac{1}{N} \sum_{i \neq j} x_i \right)^2 \right| \right] \]
for some sequences $q^+$, $q^-$ corresponding to decision trees of depth $k-1$ that do not query $x_j$. By permutation-symmetry of the function $\sum_i x_i$, this quantity would remain unchanged if $x_1$ were relabelled to $x_j$ throughout these decision trees. Following this, the overall tree's first query to $x_j$ can be replaced with a query to $x_1$ without affecting Err.
Performing this transformation recursively, the second query can be assumed to be to variable $x_2$, etc.; ultimately the tree can be replaced with an equivalent one querying variables $x_1,\dots,x_k$ in order.

%
%
Thus, splitting $x$ into a length-$k$ ``queried'' string $y$ and a length-$m$ ``unqueried'' string $z$ such that $k+m=N$, the inaccuracy Err can be written in terms of a sequence of numbers $q_y$ such that
\[ \E_{y,z} \left[ \left| q_y - \left(\frac{1}{N}\left(\sum_i y_i + \sum_j z_j\right)\right)^2 \right| \right] \le \frac{\epsilon}{N}. \]
We will find a lower bound on the left-hand side of this inequality. For convenience, take out a factor of $N^2$, rescaling $q_y$ appropriately. Then minimising this expression is equivalent to minimising
\[ \frac{1}{N^2} \E_{y,z}\!\! \left[ \left| q_y - \left(\sum_i y_i + \sum_j z_j \right)^2 \right| \right] =  \frac{1}{N^2} \E_{y,z}\!\! \left[ \left| q_y- \left(\sum_i y_i\right)^2\!\!\!\!  - \left(\sum_j z_j \right)^2\!\!\!\!  - 2\left(\sum_i y_i\right)\!\!\left(\sum_j z_j\right) \right| \right] . \]
Shifting $q_y$ by $\left(\sum_i y_i\right)^2$ (which we are free to do as $q_y$ is an arbitrary function of $y$), and using the reverse triangle inequality, we can lower-bound this expression by
\[ \frac{1}{N^2} \E_{y,z} \left[ \left| q_y - \left(\sum_j z_j \right)^2 \right| \right] - \frac{2}{N^2} \E_{y,z}\left[\left| \left(\sum_i y_i\right)\left(\sum_j z_j\right) \right| \right]. \]
The random variables $y$ and $z$ are independent and uniformly distributed; and we have $\E_{y}\left[\left| \sum_i y_i\right|\right] = O(\sqrt{k})$, $\E_{z}\left[\left| \sum_i z_i\right|\right] = O(\sqrt{m})$. So the second term is at most $O(\sqrt{km}/N^2)$ in magnitude; and the first term now does not depend on $y$. It remains to bound this term, i.e.\ to lower-bound
\[ \min_q \frac{1}{N^2} \E_z \left[ \left| q - \left(\sum_j z_j \right)^2 \right| \right]. \]
We split into cases. If $q \le m$, this expression is lower-bounded by
\[ \frac{1}{N^2} m \Pr_z\left[\left(\sum_j z_j\right)^2 \ge 2m\right] \ge \frac{Cm}{N^2} \]
for some constant $C$; similarly, if $q \ge m$, we obtain a lower bound of
\[ \frac{1}{N^2} \frac{m}{2} \Pr_z\left[\left(\sum_j z_j\right)^2 \le \frac{m}{2}\right] \ge \frac{C'm}{N^2} \]
for some constant $C'$. Therefore, the whole expression is lower-bounded by $(Cm - D\sqrt{km})/N^2 = (m/N^2)(C-D\sqrt{N/m-1})$ for some universal constants $C$ and $D$. For small enough $\epsilon > 0$, there exists a universal constant $D' < 1$ such that, for all $m \ge D' N$, this is strictly greater than $\epsilon/N$.
\end{proof}

\bibliographystyle{plainnat}
\bibliography{thesis_doi}

\end{document}